\let\OLDtodo=\todo
\renewcommand{\todo}[1]{\OLDtodo[inline,,caption={}]{#1}}
\newtheorem{definition}{Definition}
\newtheorem{theorem}[definition]{Theorem}
\newtheorem{lemma}[definition]{Lemma}
\newcommand{\omu}{\overline{\mu}}
\newcommand{\onu}{\overline{\nu}}
\newcommand{\oalpha}{\overline{\alpha}}
\newcommand{\obeta}{\overline{\beta}}
\newcommand{\Om}{\Omega_{n,m}^N(\omu,\onu)}
\newcommand{\Am}{A_\mathrm{min}}
\newcommand{\Bm}{B_\mathrm{max}}
\newcommand{\Prob}{\mathbb{P}}
\newcommand{\OMEGA}{\Om[\Am,\Bm]}
\begin{document}

\title{A Markov Chain Monte Carlo Approach to Cost Matrix Generation
  for Scheduling Performance Evaluation}

%% \author{\IEEmail: \verb+{louis-claude.canon | mohamad.el_sayah | pierre-cyrille.heam@univ-fcomte.fr}+EEauthorblockN{
%% Louis-Claude Canon\IEEEauthorrefmark{1}\IEEEauthorrefmark{2},
%% Mohamad El Sayah\IEEEauthorrefmark{2},
%% Pierre-Cyrille Héam\IEEEauthorrefmark{2}
%% }
%% \IEEEauthorblockA{\IEEEauthorrefmark{1}LIP, \'Ecole Normale Sup\'erieure de Lyon, CNRS \& Inria, France}
%% \IEEEauthorblockA{\IEEEauthorrefmark{2}FEMTO-ST, Université de Bourgogne Franche-Comté, France}
%% Email: \url{{louis-claude.canon | mohamad.el_sayah | pierre-cyrille.heam@univ-fcomte.fr}}
%% }

\author[1,2]{Louis-Claude Canon} 
\author[2]{Mohamad El Sayah} 
\author[2]{Pierre-Cyrille Héam}
\affil[1]{\footnotesize \'Ecole Normale Sup\'erieure de Lyon, CNRS \& Inria, France}
\affil[2]{ FEMTO-ST Institute, CNRS, Univ. Bourgogne Franche-Comt\'e,
  France, }

\maketitle

\begin{abstract}
  In high performance computing, scheduling of tasks and allocation to
  machines is very critical especially when we are dealing with
  heterogeneous execution costs.
  Simulations can be performed with a large variety of environments
  and application models.
  However, this technique is sensitive to bias when it relies on
  random instances with an uncontrolled distribution.
  We use methods from the literature to provide formal guarantee on
  the distribution of the instance.
  In particular, it is desirable to ensure a uniform distribution
  among the instances with a given task and machine heterogeneity.
  In this article, we propose a method that generates instances (cost
  matrices) with a known distribution where tasks are scheduled on
  machines with heterogeneous execution costs.
\end{abstract}

%%%%%%%%%%%%%%%%%%%%%%%%%%%%%%%%%%%%%%%%%%%%%%%%%%%%%%%%%%%%%%%%%%%%%%%%%%
\section{Introduction}

Empirical assessment is critical to determine the best scheduling
heuristics on any parallel platform.
However, the performance of any heuristic may be specific to a given
parallel computer.
In addition to experimentation on real platforms, simulation is an
effective tool to quantify the quality of scheduling heuristics.
Even though simulations provide weaker evidence, they can be performed
with a large variety of environments and application models, resulting
in broader conclusions.
However, this technique is sensitive to bias when it relies on random
instances with an uncontrolled or irrelevant distribution.
For instance, in uniformly distributed random graphs, the probability
that the diameter is 2 tends exponentially to 1 as the size of the
graph tends to infinity\cite{DBLP:journals/jsyml/Fagin76}.
Even though such instances may be sometimes of interest, they prove
useless in most practical contexts.
We propose a method that generates instances with a known distribution
for a set of classical problems where tasks must be scheduled on
machines (or processors) with heterogeneous execution costs.
This is critical to the empirical validation of many new heuristics
like BalSuff\cite{cp15a} for the problem $R||C_{max}$ and
PEFT\cite{arabnejad2014list} for $R|\textrm{prec}|C_{\max}$ in
Graham's notation\cite{graham79a}.

In this context, an instance consists of a $n\times m$ cost matrices,
$M$, where the element of row $i$ and column $j$, $M(i,j)$, represents
the execution cost of task $i$ on machine $j$.
Like the diameter for graphs, multiple criteria characterize cost
matrices.
First, the heterogeneity can be determined globally with the variance
of all costs, but also relatively to the rows or columns.
For instance, the dispersion of the means on each row, which
corresponds to the varying costs for each task, impacts the
performance of some scheduling
heuristics\cite{canon2017heterogeneity}.
The correlations between the rows and columns also play an important
role as it corresponds to the machines being either related or
specialized, with some affinity between the tasks and the
machines\cite{canon2017controlling}.

Among existing methods, the shuffling one\cite{canon2017heterogeneity}
starts by an initial matrix in which rows are proportional to each
other (leading to large row and column correlations).
Then, it proceeds to mix the values in the matrix such as to keep the
same sum on each row and column.
This ensures that the row and column heterogeneity remains stable,
while the correlation decreases.
However, this approach is heuristic and provides no formal guarantee
on the distribution of the instances.
In addition, when the number of shuffles increases, the cost CV
increases, which leads to non-interpretable results (see
Figure~\ref{fig:shuffling}).

\begin{figure}
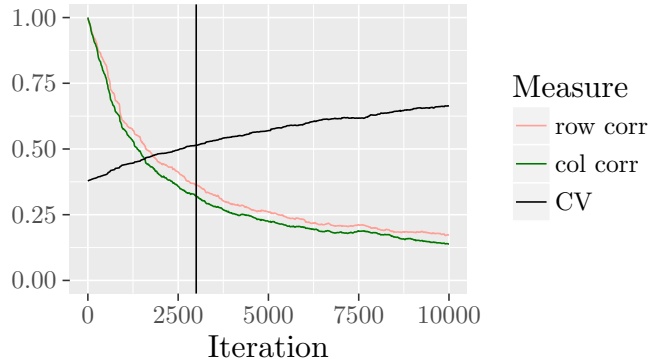

  \centering
  \includestandalone{1_limit_shuffling}
  \caption{Cost Coefficient-of-Variation (ratio of standard deviation
    to mean) and mean row and column correlations at each iteration
    of the shuffling method\cite{canon2017heterogeneity} when
    generating a $100\times30$ cost matrix.
    The shuffling method arbitrarily stops after 3\,000 iterations
    (represented by the black vertical line).}
  \label{fig:shuffling}
\end{figure}

While other methods exist, some of them with stronger formal
guarantees, it remains an open problem to ensure a uniform
distribution among the instances that have a given task and machine
heterogeneity.
Our contribution is to control the row and column heterogeneity, while
limiting the overall variance and ensuring a uniform
distribution among the set of possible instances.
The approach is based on a Markov Chain Monte Carlo process and relies
on contingency tables\footnote{A contingency table is a matrix with
the sum of each row (resp.\ column) displayed in an additional total
row (resp.\ column).
They are usually used to show the distribution of two variables.}.
More precisely, the proposed random generation process is based on two
steps. For a given $n$ (number of tasks), $m$ (number of machines) and $N$
(sum of the cost of the tasks):
\begin{enumerate}
\item Randomly generate the average cost of each task and the average speed
  of each machine. This random generation is performed uniformly using
  classical recursive algorithms\cite{DBLP:journals/tcs/FlajoletZC94}.
  In order to control the heterogeneity,
  we show how to restrict this uniform random generation to interesting
  classes of vectors. This step is described in
  Section~\ref{section:vectors}.
\item Next, the cost matrices can be generated using a classical MCMC
  approach: from an initial matrix, a random walk in the graph of
  contingency tables is performed. It is known (see for instance\cite{LevinPeresWilmer2006})
  that if the Markov Chain associated with this walk is ergodic and symmetric,
  then the unique stationary distribution exists and is uniform. Walking
  enough steps in the graph leads to any state with the same probability.
  Section~\ref{section:MC} provides several symmetric and ergodic Markov
  Chains for this problem. The main contribution of this section is to
  extend known results for contingency tables to contingency tables with
  min/max constraints.
\end{enumerate}

In order to evaluate the mixing time of the proposed Markov Chains
(the mixing time is the number of steps to walk in order to be close to
the uniform distribution), we propose practical and statistical estimations
in Section~\ref{MC:convergence}. Note that obtaining theoretical bound on
mixing time is a very hard theoretical problem, still open in the general
case of unconstrained contingency tables.
In Section~\ref{sec:const-effect-cost}, we used our random generation process
to evaluate scheduling algorithms.
The algorithms are implemented in R and Python and the related code,
data and analysis are available in\cite{figshare}.

%%%%%%%%%%%%%%%%%%%%%%%%%%%%%%%%%%%%%%%%%%%%%%%%%%%%%%%%%%%%%%%%%%%%%%%%%%
\section{Related Work}\label{section:related}

Two main methods have been used in the literature: RB (range-based)
and CVB (Coefficient-of-Variation-Based)\cite{ali2000a,ali2000b}.
Both methods follow the same principle: $n$ vectors of $m$ values are
first generated using a uniform distribution for RB and a gamma
distribution for CVB; then, each row is multiplied by a random value
using the same distribution for each method.
A third optional step consists in sorting each row in a submatrix,
which increases the correlation of the cost matrix.
However, these methods are difficult to use when generating a matrix
with given heterogeneity and low
correlation\cite{canon2017controlling,canon2017heterogeneity}.

More recently, two additional methods have been proposed for a better
control of the heterogeneity: SB (shuffling-based) and NB
(noise-based)\cite{canon2017heterogeneity}.
In the first step of SB, one column of size $n$ and one row of size
$m$ are generated using a gamma distribution.
These two vectors are then multiplied to obtain a $n\times m$ cost
matrix with a strong correlation.
To reduce it, values are shuffled without changing the sum on any row
or column as it is done is Section~\ref{section:MC}: selecting four
elements on two distinct rows and columns (a submatrix of size
$2\times2$); and, removing/adding the maximum quantity to two elements
on the same diagonal while adding/removing the same quantity to the
last two elements on the other diagonal.
While NB shares the same first step, it introduces randomness in the
matrix by multiplying each element by a random variable with expected
value one instead of shuffling the elements.
When the size of the matrix is large, SB and NB provide some control
on the heterogeneity but the distribution of the generated instances
is unknown.

Finally, CNB (correlation noise-based) and CB (combination-based) have
been proposed to control the correlation\cite{canon2017controlling}.
CNB is a direct variation of CB to specify the correlation more
easily.
CB combines correlated matrices with an uncorrelated one to obtain the
desired correlation.
As for SB and NB, both methods have asymptotic guarantees when the
size of the matrix tends to infinity, but no guarantee on how
instances are distributed.

The present work relies on contingency tables/matrices, which are
important data structures used in statistics
for displaying the multivariate frequency distribution of variables,
introduced in 1904 by K. Pearson\cite{pearson}. The MCMC approach is the
most common way used in the literature for the uniform random generation of
contingency tables (see for
instance\cite{diaconnis-saloff,DBLP:journals/isci/Pardo94}). Mixing time
results have been provided for the particular case of $2\times n$ sized tables
in\cite{DBLP:journals/rsa/Hernek98} and the latter using a coupling argument
in\cite{DBLP:journals/tcs/DyerG00}. In this restricted context a
divide-and-conquer algorithm has recently been pointed out\cite{desalvo}.
In practice, there are MCMC dedicated packages for most common
programming languages:
{\tt mcmc}\footnote{\url{https://cran.r-project.org/web/packages/mcmc/index.html}}
for R, {\tt pymc}\footnote{\url{https://pypi.python.org/pypi/pymc/}}
for Python, \ldots

More generally, random generation is a natural way for performance evaluation used, for
instance in SAT-solver
competitions\footnote{http://www.satcompetition.org/}. In a distributed
computing context, it has been used for instance for the random generation
of DAG modelling graph task for parallel
environments\cite{10.1007/978-3-642-30111-7_93,DBLP:conf/simutools/CordeiroMPTVW10}.

%%%%%%%%%%%%%%%%%%%%%%%%%%%%%%%%%%%%%%%%%%%%%%%%%%%%%%%%%%%%%%%%%%%%%%%%%%
\section{Contingency vectors initialization}\label{section:vectors}

Considering $n$ tasks and $m$ machines, the first step in order to generate
instances is to fix the average cost of each task and the average speed of
each machine. Since $n$ and $m$ are fixed, instead of generating cost
averages, we generate the sum of the cost on each row and column, which is
related. The problem becomes, given $n,m$ and $N$ (total cost) to generate
randomly (and uniformly) two vectors $\omu\in \mathbb{N}^n$ and
$\onu\in \mathbb{N}^m$ satisfying:
\begin{equation}\label{eq:contengency1}
\sum_{i=1}^n\omu(i)=\sum_{j=1}^m \onu(j)=N,
\end{equation}
with the following convention on notations: for any vector
$\overline{v}=(v_1,\ldots,v_\ell)\in \mathbb{N}^\ell$, $v_i$ is denoted
$\overline{v}(i)$.

Moreover, the objective is also to limit the maximum value.
This is useful to avoid large variance: for this purpose we
restrict the generation to vectors whose parameters are in a controlled interval
$[\alpha,\beta]$. 
%Given, $n$, $m$ and $N$, the question of finding two decompositions $\omu$
%and $\onu$ satisfying Equation~(\ref{eq:contengency1}) 
This question is addressed in this
section using a classical recursive
approach\cite{DBLP:journals/tcs/FlajoletZC94}. More precisely, let
$\alpha\leq \beta$ be positive integers and $H_{N,n}^{\alpha,\beta}$ be the
subset of elements $\omu$ of $\mathbb{N}^n$ such that $N=\sum_{i=1}^n
\omu(i)$ and for all $1\leq i\leq n$, $\alpha\leq \omu(i)\leq \beta$ (i.e.\
the set of all possible vectors with values between $\alpha$ and
$\beta$). Let
$h_{N,n}^{\alpha,\beta}$ be the cardinal of $H_{N,n}^{\alpha,\beta}$. By
decomposition one has
\begin{equation}\label{eq:rec1}
h_{N,n}^{\alpha,\beta}=\sum_{k=\alpha}^\beta h_{N-k,n-1}^{\alpha,\beta}.
\end{equation}
Moreover,
\begin{equation}\label{eq:rec2}
\begin{array}{l}
h_{N,n}^{\alpha,\beta}=0 \text{ if } \alpha n < N\text{ or }   \beta n > N
\text{ and,}\\
h_{N,1}^{\alpha,\beta}=1  \text{ if } \alpha <  N < \beta.
\end{array}
\end{equation}

Algorithm~\ref{algo:nu} uniformly generates a random vector over
$H_{N,n}^{\alpha,\beta}$.

\begin{algorithm}
\DontPrintSemicolon
\KwIn{Integers $N$, $n$, $\alpha$, $\beta$}
\KwOut{$\omu\in \mathbb{N}^n$ such that $\alpha\leq\omu(i)\leq\beta$ and
  $\sum_i\omu(i)=N$ if it is possible\newline
$\bot$ otherwise.}

\Begin{
\If{$\alpha > \beta$ {\bf or} $n\alpha > N$  {\bf or} $n\beta < N$} {\Return
  $\bot$}
\For{$1\leq k\leq n$ {\bf and } $0\leq N^\prime \leq N$}{{\bf compute}
  $h_{N^\prime,k}^{\alpha,\beta}$ {\bf using (\ref{eq:rec1}) and (\ref{eq:rec2})}.}
\For{$i\in [1,\ldots,n]$}{
$s=0$\;
\eIf{$N-s\geq 0$}{
{\bf pick at random} $\omu(i)\in [\alpha,\beta]$ {\bf with }
$\Prob(\omu(i)=k)=\dfrac{h_{N-s-k,k}^{\alpha,\beta}}{h_{N-s,n-i}^{\alpha,\beta}}$\;
$s=s+k$}{$\omu_i=0$}
}
\Return{$\omu$}
}

\caption{Generate Sequences}\label{algo:nu}
\end{algorithm}

Note that integers involved in these computations may become rapidly very
large.
Working with floating point approximations to represent integers may
be more efficient.
Moreover, with the rounded errors the random generation stays very
close to the uniform distribution\cite{DBLP:journals/tcs/DeniseZ99}.

Figure~\ref{fig:distrib} depicts the distribution of the values when
varying the interval $[\alpha,\beta]$ for $n=10$ and $N=100$.
Without constraint ($\alpha=0$ and $\beta=100$), the distribution is
similar to an exponential one: small values are more likely to appear
in a vector than large ones.
When only the largest value is bounded ($\alpha=0$ and $\beta=15$),
then the shape of the distribution is inverted with smaller values
being less frequent.
Finally, bounding from both sides ($\alpha=5$ and $\beta=15$) leads to
a more uniform distribution.

\begin{figure}
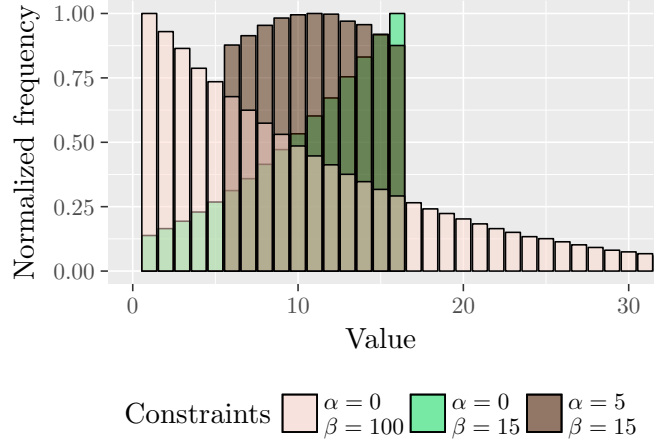

  \centering
  \small
  \includestandalone{2_hist_unif_vector}
  \caption{Frequency of each value in a vector of size $n=10$ with
    $N=100$ generated by Algorithm~\ref{algo:nu} for three
    combinations of constraints for the minimum $\alpha$ and maximum
    $\beta$.
    For each case, the frequencies were determined by generating
    $100\,000$ vectors and are normalized to the maximum frequency.
    The frequency for large values when $\alpha=0$ and $\beta=100$ are
    not shown.}
  \label{fig:distrib}
\end{figure}

Figure~\ref{fig:variance} shows the CV obtained for all possible
intervals $[\alpha,\beta]$.
The more constrained the values are, the lower the CV.
The CV goes from 0 when either $\alpha=10$ or $\beta=10$ (the vector
contains only the value 10) to 1 when $\alpha=0$ and $\beta=100$.

\begin{figure}
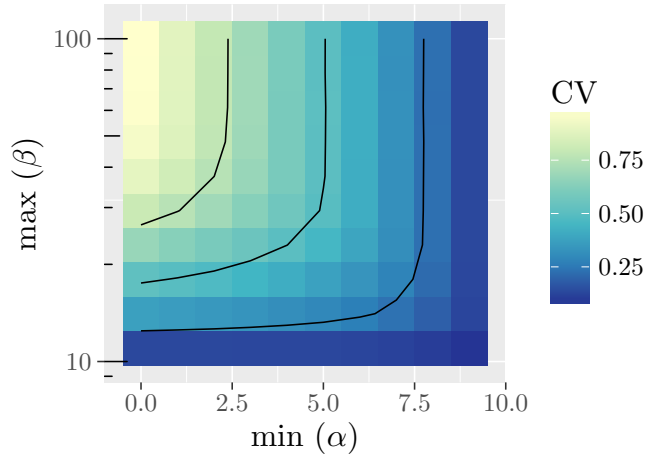

  \centering
  \includestandalone{3_heatmap_unif_vector}
  \caption{Mean CV in vectors of size $n=10$ with $N=100$ generated by
    Algorithm~\ref{algo:nu} for different constraints for the minimum
    $\alpha$ and maximum $\beta$.
    Each tile corresponds to $10\,000$ vectors.
    The contour lines correspond to the levels in the legend (2.5, 5,
    7.5 and 10).}
  \label{fig:variance}
\end{figure}

It is also possible to generate uniform vectors using Boltzmann
samplers\cite{DBLP:journals/cpc/DuchonFLS04}: this approach consists
in generating each $\onu(i)$ independently according to an exponential
law of parameter~$\gamma$.
Theoretical results of\cite{DBLP:journals/cpc/DuchonFLS04} show that
by choosing the right $\gamma$, the sum of the generated $\onu(i)$'s
is close to $N$ with a high probability.
In order to get precisely $N$, it suffices to use a rejection
approach.
This is consistent with the seemingly exponential distribution in
Figure~\ref{fig:distrib} in the unconstrained case.
Moreover, in this case, Figure~\ref{fig:variance} shows that the CV is
close to one, which is also the CV of an exponential distribution.

%%%%%%%%%%%%%%%%%%%%%%%%%%%%%%%%%%%%%%%%%%%%%%%%%%%%%%%%%%%%%%%%%%%%%%%%%%
\section{Symmetric Ergodic Markov Chains for the Random Generation}\label{section:MC}

We can now generate two random vectors $\omu$ and $\onu$ containing
the sum of each row and column with Algorithm~\ref{algo:nu}.
To obtain actual cost, we use Markov Chains to generate the
corresponding contingency table.
Random generation using finite discrete Markov Chains can easily be
explained using random walk on finite graphs.
Let $\Omega$ be the finite set of all possible cost matrices (also
called states) with given row and column sums: we want to sample
uniformly one of its elements.
However, $\Omega$ is too
large to be built explicitly. The approach consists in building a directed graph
whose set of vertices is $\Omega$ and whose set of edges represent all
the possible transitions between any pair of states.
Each edge of the graph is weighted by a
probability with a classical normalization: for each vertex, the sum of the
probabilities on outgoing edges is equal to~$1$. One can now consider random
walks on this graph. A classical Markov Chain result claims that for some
families of 
probabilistic graphs/Markov Chains,  walking long enough in the
graph, we have the same probability to be in each state, whatever the
starting vertex of the walk\cite[Theorem~4.9]{LevinPeresWilmer2006}.

This is the case for symmetric ergodic Markov Chains\cite[page 37]{LevinPeresWilmer2006}.
Symmetric means that if there is an edge $(x,y)$ with probability
$p$, then the graph has an  edge $(y,x)$ with the same
probability. A Markov Chain is ergodic if it is aperiodic (the gdc of the
lengths of loops of the graph is $1$) and if the graph is strongly
connected. When there is a loop of length $1$, the ergodicity issue reduces
to the strongly connected problem. In general, the graph is not explicitly
built and neighborhood relation is defined by a function, called a random
mapping, on each state. 
For a general reference on finite Markov Chains with many pointers,
see~\cite{LevinPeresWilmer2006}.

An illustration example is depicted on Fig~\ref{fig:exMC}. For
instance, starting arbitrarily from the central vertex, after one
step, we are in any other vertex with probability~$\frac{1}{6}$ (and
with probability $0$ in the central vertex since there is no self-loop
on it). After two steps, we are in the central vertex with
probability~$\frac{1}{6}$ and in any other with probability
$\frac{5}{36}$. In this simple example, one can show that after $n+1$
step, the probability to be in the central node is
$p_{n+1}=\frac{1}{7}(1-\left(\frac{-1}{6}\right)^n)$ and is
$\frac{1-p_{n+1}}{6}$ for all the other nodes. All probabilities tends
to $\frac{1}{7}$ when $n$ grows. 

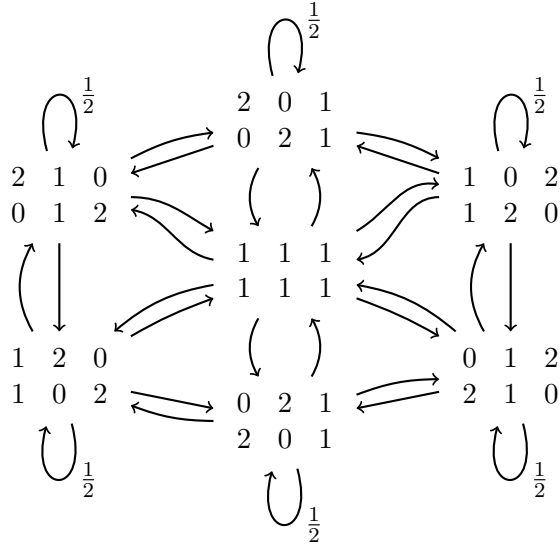
\begin{figure}
\centering
\begin{tikzpicture}

\node (C) at (0,0) {$\begin{array}{ccc}1 & 1 & 1\\1 & 1 & 1\end{array}$};
\node (A) at (0,2) {$\begin{array}{ccc}2 & 0 & 1\\0 & 2 & 1\end{array}$};
\node (D) at (3,1){$\begin{array}{ccc}1 & 0 & 2\\1 & 2 & 0\end{array}$};
\node (B) at (-3,1){$\begin{array}{ccc}2 & 1 & 0\\0 & 1 & 2\end{array}$};
\node (E) at (-3,-1.4){$\begin{array}{ccc}1 & 2 & 0\\ 1 & 0 & 2\end{array}$};
\node (F) at (3,-1.4){$\begin{array}{ccc}0 & 1 & 2\\ 2 & 1 & 0\end{array}$};
\node (G) at (0,-2){$\begin{array}{ccc}0 & 2 & 1\\ 2 & 0 & 1\end{array}$};

\path[->,thick] (C) edge[bend right] node[right] {} (A);
\path[->,thick] (A) edge[bend right] node[right] {} (C);

\path[->,thick] (C) edge[bend right] node[right] {} (G);
\path[->,thick] (G) edge[bend right] node[right] {} (C);

\path[<-,thick] (C) edge[out=150,in=0] node[above] {} (B);
\path[<-,thick] (B) edge[out=-10,in=170] node[left] {} (C);

\path[->,thick] (D) edge[in=10,out=180] node[right] {} (C);
\path[->,thick] (C) edge[in=170,out=30] node[above] {} (D);

\path[->,thick] (E) edge[out=30,in=200] node[below] {} (C);
\path[->,thick] (C) edge[out=190, in=40] node[above] {} (E);

\path[->,thick] (F) edge[in=-10,out=140] node[above] {} (C);
\path[->,thick] (C) edge[out=-20, in = 150] node[below] {} (F);

\path[->,thick] (B) edge[] node[right] {} (E);
\path[->,thick] (E) edge[bend left] node[right] {} (B);

\path[->,thick] (D) edge[] node[right] {} (F);
\path[->,thick] (F) edge[bend left] node[right] {} (D);

\path[->,thick] (E) edge[] node[above] {} (G);%
\path[->,thick] (G) edge[bend left=10] node[below] {} (E);%

\path[->,thick] (F) edge[] node[above] {} (G);%
\path[->,thick] (G) edge[bend left=10] node[below] {} (F);%

\path[->,thick] (A) edge[pos=0.2] node[below] {} (B);
\path[->,thick] (B) edge[pos=0.2,bend left=10] node[above] {}
(A);

\path[->,thick] (D) edge[pos=0.8] node[below] {} (A);
\path[->,thick] (A) edge[pos=0.2,bend left=10] node[above] {} (D);

\path[->,thick] (A) edge[loop above] node[right] {$\ \frac{1}{2}$} ();
\path[->,thick] (B) edge[loop above] node[right] {$\ \frac{1}{2}$} ();
\path[->,thick] (D) edge[loop above] node[right] {$\ \frac{1}{2}$} ();
\path[->,thick] (E) edge[loop below] node[right] {$\ \frac{1}{2}$} ();
\path[->,thick] (F) edge[loop below] node[right] {$\ \frac{1}{2}$} ();
\path[->,thick] (G) edge[loop below] node[right] {$\ \frac{1}{2}$} ();
\end{tikzpicture}
\caption{Example of the underlying graph of a Markov Chain when the
  sum of each row is three and the sum of column is two.
  Unless otherwise stated, each transition probability is
  $\frac{1}{6}$.}\label{fig:exMC}
\end{figure}

This section is dedicated to building symmetric and ergodic Markov Chains for
our problem. In Section~\ref{sec:omega} we define the sets $\Omega$ that are
interesting for cost matrices.  In Section~\ref{sec:MC}, Markov Chains are
proposed using a dedicated random mapping and are proved to be symmetric and
ergodic. Finally, in Section~\ref{sec:rapid} we use classical techniques to
transform the Markov Chains into other symmetric ergodic MC mixing faster
(i.e.\ the number of steps required to be close to the uniform distribution
is smaller). 

Recall that $N,n,m$ are positive integers and that
$\omu\in \mathbb{N}^n$ and $\onu\in \mathbb{N}^m$ satisfy
Equation~(\ref{eq:contengency1}).

%%%%%%%%%%%%%%%%
\subsection{Contingency Tables}\label{sec:omega}

In this section, we define the state space of the Markov Chains. We
consider contingency tables with fixed sums on rows and columns. We also
introduce min/max constraints in order to control the variance of the value. 
We denote by $\Om$ the set of positive $n\times
m$ matrices $M$
over $\mathbb{N}$ such that for every $i\in\{1,\ldots,n\}$ and every
$j\in\{1,\ldots,m\}$,
\begin{equation}\label{eq:contengency2}
\sum_{k=1}^m M(i,k)=\omu(i)\quad\text{and}\quad\sum_{k=1}^n M(k,j)=\onu(j)
\end{equation}For example, the matrix
$$M_\mathrm{exa}=\left(\begin{array}{ccc}3 & 1 \\ 2 & 0 \\ 5 & 10 \end{array}\right)$$
is in $\Omega_{2,3}(\omu_\mathrm{exa},\onu_\mathrm{exa})$, where $\omu_\mathrm{exa}=(4,2,15)$ and
$\onu_\mathrm{exa}=(10,11)$.
\medskip

The first restriction consists in having a global minimal value $\alpha$ and
a maximal global value $\beta$ on the considered matrices.  Let
$\alpha,\beta$ be positive integers. We denote by
$\Om[\alpha,\beta]$ the subset of
$\Om$ of matrices $M$ such that for all $i,j$,
$\alpha\leq M(i,j)\leq \beta$. For example, $M_\mathrm{exa}\in
\Omega_{2,3}(\omu_\mathrm{exa},\onu_\mathrm{exa})[0,12]$. If $\beta < \alpha$, then
$\Om[\alpha,\beta]=\emptyset$. Moreover, according to
Equation~(\ref{eq:contengency2}), one has
\begin{equation}\label{eq:maxglobal}
\begin{array}{ll}
\Om&=\Om[0,N]\\&=\Om[0,\min(\max_{1\le
    k\le m}\omu(k),\\&\hspace{3em}\max_{1\le k\le n}\onu(k))].
\end{array}
\end{equation}

Now we consider min/max constraints on each row and each line.
Let $\oalpha_c,\obeta_c\in \mathbb{N}^m$ and $\oalpha_r,\obeta_r\in
\mathbb{N}^n$. We denote by
$\Om[\oalpha_c,\obeta_c,\oalpha_r,\obeta_r]$ the
subset of $\Om$ of matrices $M$ satisfying: for all
$i,j$, $\oalpha_c(j)\leq M(i,j)\leq \obeta_c(j)$ and $\oalpha_r(i)\leq
M(i,j)\leq \obeta_r(i)$.
For instance, $$M_{\rm
  exa}\in \Omega_{2,3}(\omu_\mathrm{exa},\onu_\mathrm{exa})[(1,0,5),(3,2,10),(2,0),(5,10)].$$
Using Equation~(\ref{eq:contengency2}), one has for
every $\alpha,\beta\in\mathbb{N}$,
\begin{equation}\label{eq:maxrowcol}
\begin{array}{l}
\Om[\alpha,\beta]=\Om[(\alpha,\ldots,\alpha),\\\hspace{3em}(\beta,\ldots,\beta),(\alpha,\ldots,\alpha),(\beta,\ldots,\beta)].
\end{array}
\end{equation}

To finish, the more general constrained case, where min/max are defined for
each element of the matrices. 
Let $\Am$ and $\Bm$ be two $n\times m$ matrices of positive integers.
We denote by
$\Om[Am,Bm]$ the subset of $\Om$
of matrices $M$ such that for all $i,j$, $\Am(i,j)\leq M(i,j)\leq \Bm(i,j)$.
For instance, one has $M_\mathrm{exa}\in \Om[A_{\rm
    exa},B_\mathrm{exa}]$, with
$$A_{\rm
  exa}=\left(\begin{tabular}{ccc}3&2&4\\0&0&5\end{tabular}\right)\quad
  \text{and}\quad B_{\rm
  exa}=\left(\begin{tabular}{ccc}5&4&6\\1&3&12\end{tabular}\right).
$$
 For
every $\oalpha_c,\obeta_c\in\mathbb{N}^m,\oalpha_r,\obeta_r\in\mathbb{N}^n$,
one has
\begin{equation}\label{eq:maxelement}
\Om[\oalpha_c,\obeta_c,\oalpha_r,\obeta_r]=\Om[A,B],
\end{equation}
where $A(i,j)=\max\{\oalpha_c(j),\oalpha_r(i)\}$ and
$B(i,j)=\min\{\obeta_c(j),\obeta_r(i)\}$.

%%%%%%%%%%%%%%%%
\subsection{Markov Chains}\label{sec:MC}

As explained before, the random generation process is based on symmetric
ergodic Markov Chains. This section is dedicated to define such chains on
state spaces of the form $\Om$, $\Om[\alpha,\beta]$,
$\Om[\oalpha_c,\obeta_c,\oalpha_r,\obeta_r]$ and $\OMEGA$. According
to Equations~(\ref{eq:maxglobal}), (\ref{eq:maxrowcol})
and~(\ref{eq:maxelement}), it suffices to work on $\OMEGA$. To
simplify the notation, let us denote by $\Omega$ the set $\OMEGA$.

For  any $1\leq i_0,i_1\leq n$, any $1\leq j_0,j_1,\leq m$, such that
$i_0\neq i_1$ and $j_0\neq j_1$,
we denote by $\Delta_{i_0,i_1,j_0,j_1}$ the $n\times m$ matrix defined by
$\Delta(i_0,j_0)=\Delta(i_1,j_1)=1$, $\Delta(i_0,j_1)=\Delta(i_1,j_0)=-1$,
and $\Delta(i,j)=0$ otherwise. For instance, for $n=3$ and $m=4$ one has
$$\Delta_{1,2,1,3}=\left(\begin{tabular}{cccc} 1 & 0 & -1 & 0\\
-1 & 0 & 1 & 0 \\ 0 & 0 & 0 & 0
\end{tabular}\right).$$

Tuple $(i_0,j_0,i_1,j_1)$ is used as follow to shuffle a cost matrix
and to transit from one state to another in the markov chain:
$\Delta_{i_0,i_1,j_0,j_1}$ is added to the current matrix, which
preserves the row and column sums.
Formally, let $K=\{(i_0,j_0,i_1,j_1)\mid i_0\neq i_1, j_0\neq j_1,
1\leq i_0,i_1\leq
n, 1\leq j_0,j_1\leq m\}$ be the set of all possible tuples.
Let $f$ be the mapping function from $\Omega\times
K$ to $\Omega$ defined by
$f(M,(i_0,j_0,i_1,j_1))=M+\Delta_{(i_0,j_0,i_1,j_1)}$ if
$M+\Delta_{(i_0,j_0,i_1,j_1)}\in \Omega$ and $M$ otherwise.
The mapping is called at each iteration, changing the instance
until it is sufficiently shuffled.

We consider the Markov chain $\mathcal{M}$ defined on $\Omega$ by the
random mapping $f(\cdot,U_K)$, where $U_K$ is a uniform random variable on
$K$.

The following result gives the properties of the markov chain and
is an extension of a similar
result\cite{diaconnis-saloff} on $\Om$. The difficulty is to prove that the
underlying graph is strongly connected since the constraints are hindering
the moves.

\begin{theorem}\label{thm:ergodic}
The Markov Chain  $\mathcal{M}$ is symmetric and ergodic.
\end{theorem}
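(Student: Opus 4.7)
The plan is to establish the theorem in three stages: symmetry of the one-step kernel, existence of a self-loop (which combines with irreducibility to yield aperiodicity), and irreducibility. The authors flag the last stage, under the entrywise constraints $\Am,\Bm$, as the genuinely new ingredient, so that is where most of the effort will go.

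For symmetry, consider the involution $\sigma:(i_0,j_0,i_1,j_1)\mapsto(i_1,j_0,i_0,j_1)$ on $K$, which satisfies $\Delta_{\sigma(k)}=-\Delta_k$. Since $U_K$ is uniform on $K$, the tuples $k$ and $\sigma(k)$ carry the same probability $1/|K|$, and the constraints $\Am\le\cdot\le\Bm$ are preserved by adding $\Delta_k$ to $M$ if and only if they are preserved by adding $-\Delta_k$ to $M+\Delta_k$. Hence the transitions $M\to M+\Delta_k$ and $M+\Delta_k\to M$ have equal probability. For the self-loop needed for aperiodicity, the case $|\Omega|=1$ is immediate; otherwise, if every move were accepted from every state, iterating a fixed $\Delta_k$ from any $M$ would keep us in $\Omega$ forever, violating the bound $\Bm$. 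Hence some state rejects some move, producing a self-loop which, combined with irreducibility, forces period $1$.

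For irreducibility I would proceed by induction on $d(M,M')=\|M-M'\|_1$. Let $D=M'-M$, which has zero row and column sums and is nonzero. Pick $(i_0,j_0)$ with $D(i_0,j_0)>0$, then $j_1$ in row $i_0$ with $D(i_0,j_1)<0$, and then $i_1$ in column $j_0$ with $D(i_1,j_0)<0$. The move $+\Delta_{i_0,i_1,j_0,j_1}$ sends three of its four corners strictly toward $M'$, and those three corners are automatically admissible since $D>0$ gives $M<M'\le\Bm$ while $D<0$ gives $M>M'\ge\Am$. As soon as the fourth corner $(i_1,j_1)$ also satisfies $M(i_1,j_1)<\Bm(i_1,j_1)$, the move lies in $\Omega$ and $d$ decreases by at least~$2$, completing the induction step.

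The hard part is the degenerate case where every choice of $(i_1,j_1)$ of this form yields $M(i_1,j_1)=\Bm(i_1,j_1)$, forcing $D(i_1,j_1)\le 0$ and blocking the direct move. To handle it I plan to extend the walk $i_0,j_0,i_1,j_1,\ldots$ through the bipartite sign graph of $D$ until it closes into a minimal alternating cycle of length $2\ell\ge 4$; such a closure always exists since every nonzero row and column of $D$ contains entries of both signs. When $\ell=2$ the cycle itself supplies an admissible alternating $2\times 2$ move decreasing $d$ by~$4$, because at $+$-corners one has $M<\Bm$ strictly and at $-$-corners $M>\Am$ strictly. When $\ell>2$, the delicate task is to realize the cycle as a short sequence of admissible $2\times 2$ moves: either a chord of the cycle yields a $2\times 2$ whose off-cycle corner is not pinned at its bound (an admissible step strictly decreasing $d$), or entries pinned simultaneously at the bounds in both $M$ and $M'$ can be removed and the problem reduced to a smaller contingency-table instance on which the inductive hypothesis applies. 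Threading these sub-cases into a single argument is the technical heart of extending the classical unconstrained connectivity result to $\OMEGA$, and is precisely what the paper identifies as the main novelty of the theorem.
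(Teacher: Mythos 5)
Your symmetry and aperiodicity arguments are fine and essentially identical to the paper's (the paper also exhibits the reverse tuple for symmetry, and obtains a self-loop by iterating a fixed move until it must be rejected). Your setup for irreducibility is also the right one: induction on $d(M,M')$, the observation that three corners of the candidate $2\times 2$ move are automatically admissible because $\Am\le M'\le\Bm$, and the construction of an alternating cycle in the sign pattern of $D=M'-M$ (this is exactly the paper's ``stair sequence'' of Lemma~\ref{lm:stair}). The case $\ell=2$ matches the paper's Case~0.

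The gap is precisely in the part you defer: the degenerate case $\ell>2$ where every off-cycle corner you would like to use is pinned at a bound. Your first alternative (``some chord has an unpinned off-cycle corner'') corresponds to the paper's Cases~1--3, but you neither specify which chords must be examined nor show that their failure pins down the configuration enough for the remaining case to be tractable. Your second alternative does not work as stated: the blocking entries are pinned at $\Am$ or $\Bm$ \emph{in $M$}, but need not coincide with the corresponding entries of $M'$, so they cannot be ``removed''; and even if they could, deleting entries does not yield a smaller instance of the same problem to which your induction (which is on $d$, not on $n$ or $m$) applies. More importantly, in the fully pinned configuration no single $2\times 2$ move both stays in $\Omega$ and decreases $d$; the paper's resolution (Case~4 of Lemma~\ref{lm:path}) is a \emph{cascade} of moves marching down one column of the cycle, starting from the highest unpinned entry $i_0$, where each intermediate matrix lies in $\Omega$ but the distance to $M'$ only drops in aggregate at the end of the chain. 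This is why Lemma~\ref{lm:path} is stated for a sequence $t_1,\ldots,t_k$ of tuples rather than a single move. Your induction step, which insists on a single admissible move decreasing $d$ by at least $2$, cannot be closed without this multi-step mechanism, and supplying it is exactly the content of the theorem's novel part.
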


The proof of Theorem~\ref{thm:ergodic} is based on Lemma~\ref{lm:stair}
and~\ref{lm:path}.

\begin{definition}\label{def:stair}
Let $A$ et $B$ be two elements of $\Omega$. A finite sequence
$u_1=(i_1,j_1),\ldots,u_r=(i_r,j_r)$ of pairs of indices in
$\{1,\ldots,n\}\times\{1,\ldots,m\}$ is called a {\it stair} sequence for
$A$ and $B$ if it satisfies the following properties:
\begin{enumerate}
\item $r \geq 4$,
\item If $k\neq \ell$, then $u_k\neq u_\ell$,

\item If $1\le k<r$ is even, then $j_k=j_{k+1}$ and $A(i_k,j_k) < B(i_k,j_k)$
\item If $1\le k<r$ is odd, then $i_k=i_{k+1}$ and $A(i_k,j_k) > B(i_k,j_k)$,

\item $r$ is even and  $j_r=j_1$,
%\item $r$ is odd and $i_r=i_1$.
\end{enumerate}
\end{definition}

Consider, for instance, the matrices 
$$A_1=\left(
\begin{array}{ccccc}
3 & 0 & 0 & 0 & 7\\
7 & 4 & 0 & 0 & 0\\
0 & 7 & 5 & 0 & 0\\
0 & 0 & 7 & 6 & 0\\
0 & 0 & 0 & 7 & 5\\
\end{array}
\right)
B_1=\left(
\begin{array}{ccccc}
2 & 1 & 0 & 0 & 7\\
7 & 3 & 1 & 0 & 0\\
0 & 7 & 4 & 1 & 0\\
0 & 0 & 7 & 5 & 1\\
1 & 0 & 0 & 7 & 4\\
\end{array}
\right).
$$ 

The sequence $(1,1), (1,2),(2,2),(2,3),(3,3),(3,4),(4,4),\\ (4,5),(5,5),(5,1)$ is a
stair sequence for $A_1$ and $B_1$. 

\begin{lemma}\label{lm:stair}
Let $A$ et $B$ be two distinct elements of $\Omega$. There
exists a stair sequence for $A$ and $B$.
\end{lemma}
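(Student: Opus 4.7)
The plan is to build the stair sequence as a cycle in an auxiliary digraph encoding the entrywise comparison between $A$ and $B$. First set $C = A - B$. Since $A$ and $B$ share the same row and column sums, every row and column of $C$ sums to zero; since $A \neq B$ we have $C \neq 0$. Let $P = \{(i,j) : A(i,j) > B(i,j)\}$ and $N = \{(i,j) : A(i,j) < B(i,j)\}$. The vanishing row sums of $C$ force any row containing an entry of $P$ to also contain an entry of $N$, and vice versa; the same holds for columns. In particular both $P$ and $N$ are non-empty.

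Next I would define a directed graph $D$ with vertex set $P \cup N$ containing a \emph{horizontal} edge from each $(i,j) \in P$ to each $(i, j') \in N$ with $j' \neq j$, and a \emph{vertical} edge from each $(i,j) \in N$ to each $(i', j) \in P$ with $i' \neq i$. By the row/column observation above, every vertex of $D$ has out-degree at least $1$, so starting from any vertex in $P$ and repeatedly following out-edges, the walk must revisit a vertex after at most $|P|+|N|$ steps, extracting a cycle of $D$.

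Because the edges of $D$ strictly alternate between $P$ and $N$, every cycle has even length $r$. A length-$2$ cycle $u_1 = (i,j) \to u_2 = (i',j') \to u_1$ would require $i'=i$ from the first (horizontal) edge and $j' = j$ from the second (vertical) edge, forcing $u_1 = u_2$, a contradiction. Hence $r \geq 4$. After rotating the cycle if necessary so that it begins at a $P$-vertex, I relabel its vertices as $u_1, \ldots, u_r$ with $u_k \in P$ for $k$ odd and $u_k \in N$ for $k$ even.

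Finally I verify the clauses of Definition~\ref{def:stair}: clauses (1) and (2) hold by construction; for odd $k < r$, the edge $u_k \to u_{k+1}$ is horizontal, so $i_k = i_{k+1}$, and $A(i_k, j_k) > B(i_k, j_k)$ since $u_k \in P$, giving clause (4); for even $k < r$, the edge $u_k \to u_{k+1}$ is vertical, so $j_k = j_{k+1}$, with $A(i_k, j_k) < B(i_k, j_k)$ since $u_k \in N$, giving clause (3); and the closing edge $u_r \to u_1$ is of type $N \to P$, hence vertical, giving $j_r = j_1$ as in clause (5). The only real subtlety is excluding length-$2$ cycles and choosing the rotation that places a $P$-vertex first; once $D$ is set up, the rest is essentially definition-chasing. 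Notably the min/max constraints $\Am,\Bm$ play no role in this lemma, as the sum conditions alone suffice.
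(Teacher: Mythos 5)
Your proof is correct and is essentially the paper's argument: both build the stair sequence by walking alternately along rows and columns through cells where $A-B$ has alternating sign (which exist because the row and column sums of $A-B$ vanish), then extract an even cycle of length at least $4$ once something repeats. Your digraph packaging, stopping at the first repeated \emph{cell} rather than the first repeated row/column index, is a slightly cleaner way to organize the same cycle-extraction step, but it is not a different method.
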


\begin{proof}
The proof is by construction. Since $A$ and $B$ are distinct, using the
constraints on the sums of rows and columns, there exists a pair of indices
$u_1=(i_1,j_1)$ such that $A(i_1,j_1) > B(i_1,j_1)$. Now using the sum
constraint on row  $i_1$, there exists $j_2$ such that $B(i_1,j_2) <
A(i_1,j_2)$. Set $u_2=(i_1,j_2)$. Similarly, using the sum constraint on
column $j_2$, there exists $i_3\neq i_1$ such that $A(i_3,j_2) > B(i_3,j_2)$.
Set $u_3=(i_3,j_2)$. Similarly, by the constraint on row $i_3$, there
exists $j_4$ such that $A(i_3,j_4) < B(i_3,j_4)$. At this step,
$u_1,u_2,u_3,u_4$ are pairwise distinct.

If $j_4=j_1$, then $u_1,u_2,u_3,u_4$ is a stair sequence for $A$ and $B$.
Otherwise, by the $j_4$-column constraint, there exists $i_5$ such that
$B(i_5,j_4) > A(i_5,j_4)$. Now, one can continue the construction until the
first step $r$ we get either  $i_r=i_s$ or $j_r=j_s$ with $s <r$ (this step
exists since the set of possible indexes is finite). Note that we consider
the smallest $s$ for which this is case.
\begin{itemize}
\item If $i_r=i_s$, $s <r$, the sequence $u_1,u_2,\ldots,u_r$ satisfies the
  conditions 2., 3. and 4. of Definition~\ref{def:stair}. Moreover both $r$
  and $s$ are odd. The sequence $u_s,\ldots,u_r$ satisfies the Conditions 2.
  to 5. of Definition~\ref{def:stair}. Since $r >s$ and by construction,
  $r-s > 4$. If follows that the sequence $u_r,u_{r-1},\ldots,u_{s+1}$ is a
  stair sequence for $A$ and $B$.
\item If $j_r=j_s$, then both $r$ and $s$ are even. The sequence
  $u_{s+1},\ldots,u_r$ satifies the Conditions 1. to 5. of
  Definition~\ref{def:stair} and is therefore a stair sequence for $A$ and
  $B$.

\end{itemize}
\end{proof}

Given two $n \times m$ matrices $A$ and $B$, the distance from $A$ to
$B$, denoted $d(A,B)$, is defined by:
\[
  d(A,B)=\sum_{i=1}^n\sum_{j=1}^m |A(i,j)-B(i,j)|.
\]

\begin{lemma}\label{lm:path}
Let $A$ et $B$ be two distinct elements of $\Omega$. There
exists $C\in \Omega$ such that $d(C,B) < d(A,B)$ and
tuples $t_1,\ldots,t_k$ such that
$ C=f(\ldots f(f(A,t_1),t_2)\ldots, t_k)$ and for every
  $\ell\leq k$, $f(\ldots f(f(A,t_1),t_2)\ldots, t_\ell)\in
  \Omega$.
\end{lemma}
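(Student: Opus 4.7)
The plan is to leverage Lemma~\ref{lm:stair} and build $C$ from a small number of $\Delta$-moves along a stair. First, note that the ``intermediate matrix lies in $\Omega$'' condition is automatic: the mapping $f$ leaves the state unchanged whenever a move would exit $\Omega$, so starting in $\Omega$ we never exit it. The real content is therefore to exhibit a tuple sequence whose composition produces a state strictly closer to $B$.

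Let $u_1,\ldots,u_r$ be a stair for $A$ and $B$, with $r\ge 4$ even. The natural candidate is $t_1=(i_1,j_2,i_3,j_1)$, realizing $\Delta_{i_1,i_3,j_2,j_1}$: it shifts $u_1,u_2,u_3$ each by one unit strictly toward $B$ (by the stair inequalities) and shifts the ``fourth corner'' $(i_3,j_1)$ by $+1$. When $r=4$, the fourth corner is exactly $u_r=u_4$ (since $j_r=j_1$), so the move also shifts $u_4$ toward $B$ and yields $d(C,B)=d(A,B)-4$. When $r>4$ the fourth corner need not lie on the stair, but the net distance change is $-3\pm 1\le -2$, still a strict decrease. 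Validity at the three stair entries is automatic because $A(u_k)$ sits strictly on the $B$-side of its bounds $\Am(u_k)$ and $\Bm(u_k)$; the only obstruction is the saturation $A(i_3,j_1)=\Bm(i_3,j_1)$.

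When the natural candidate is blocked, the plan is to switch to an alternative $2\times 2$ block containing three cyclically consecutive stair vertices. Each helper cell is shared by exactly two such blocks, receiving $+1$ in one and $-1$ in the other, so unless the helper is fully pinned ($\Am(h)=\Bm(h)$, which forces $A(h)=B(h)$) at least one of the two blocks gives a valid, distance-decreasing move.

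The hard part will be the fully pinned case: the proof must argue that when every candidate helper is pinned simultaneously, the saturations themselves identify a strictly shorter stair---either by providing a new vertex at which $A$ and $B$ disagree (giving a shortcut in the stair construction of Lemma~\ref{lm:stair}) or by forcing a re-routed choice of $u_k$'s---on which one can apply induction on $r$. Absent the min/max constraints this obstruction never arises and the argument collapses to the classical single-stair move used in the unconstrained contingency-table literature; it is precisely the cells pinned by $\Am$ or $\Bm$ that make the present statement an extension rather than a restatement of the known result.
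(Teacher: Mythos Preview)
Your opening is fine: the base case $r=4$ is exactly the paper's Case~0, and your ``natural candidate'' $t_1=(i_1,j_2,i_3,j_1)$ together with its symmetric variants over other consecutive stair triples is precisely what the paper does in its Cases~1--3. The distance accounting ($-3\pm 1$) and the validity analysis at the three stair corners are correct.

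The gap is in your treatment of the blocked situation. First, the sentence ``each helper cell is shared by exactly two such blocks, receiving $+1$ in one and $-1$ in the other'' is false: in the normalised stair (rows and columns relabelled so that $u_{2k-1}=(k,k)$, $u_{2k}=(k,k+1)$, $u_r=(\frac{r}{2},1)$), the helpers for odd-started triples are the sub-diagonal cells $(\ell+1,\ell)$ and those for even-started triples are the cells $(\ell-2,\ell)$ two above the diagonal; these are pairwise distinct, and each is the helper of exactly one triple. Second, and more seriously, your fallback ``the saturations identify a strictly shorter stair, then induct on $r$'' does not work. The paper's own running example $A_1,B_1$ (with global bounds $0$ and $7$) is a case where every helper is simultaneously saturated, yet $A_1-B_1$ is nonzero \emph{only} on the ten stair cells, so the length-$10$ stair is the unique stair for $A_1,B_1$ and no shorter one exists. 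No single $2\times 2$ move from $A_1$ decreases the distance.

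What the paper does in this situation (its Case~4) is qualitatively different from anything in your outline: it locates an index $i_0$ in the last column with $A(i_0,\frac{r}{2})\ne\Am(i_0,\frac{r}{2})$ (such an $i_0$ exists because the top-right entry is at $\Bm$ while the entry just above the bottom-right is at $\Am$), and then applies a \emph{chain} of moves $t_1,\ldots,t_{\frac{r}{2}-2-i_0}$ that successively push one unit down the last column, each step being enabled by the previous one. This multi-move cascade is the missing idea; your proposal needs either this construction or a genuine substitute for it.
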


\begin{proof}
By Lemma~\ref{lm:stair}, there exists a stair sequence $u_1,\ldots,u_r$ for $A$ and
$B$. Without loss of generality (using a permutation of rows and columns) we
may assume that $u_{2k+1}=(k,k)$ and $u_{2k}=(k,k+1)$, for $k < \frac{r}{2}$
and $u_r=(\frac{r}{2},1)$.

To illustrate the proof, we introduce some $\dfrac{r}{2}\times \dfrac{r}{2}$ matrix $M$
over $\{+,-,\min,\max\}$, called {\it difference matrices}, such that: if
$M(i,j)=+$, then $A(i,j) > B(i,j)$; if $M(i,j)=-$, then $A(i,j) <
B(i,j)$; if $M(i,j)=\min$, then $A(i,j)=\Am(i,j)$; and if $M(i,j)=\max$,
then $A(i,j)=\Bm(i,j)$.

Considering for instance the matrices $A_1$ and $B_1$ defined before, with a
global minimum equal to $0$ and global maximum equal to $7$, a
difference matrix is

$$\left(
\begin{array}{ccccc}
+ & - & \min & \min & \max\\
\max & + & - & \min & \min\\
\min & \max & + & - & \min\\
\min & \min & \max & + & -\\
- & \min & \min & \max & +\\
\end{array}
\right).
$$ 
Note that it may exist several difference matrices since, for instance, some $+$ might be
replaced by  a $\max$ 

The proof investigates several cases:
\begin{itemize}
\item[Case 0:] If $r =4$, then $k=1$ and $t_1=(2,1,1,2)$ works. Indeed,
  since $\Bm(i,j)\geq A(1,1)>B(1,1)\geq \Am(i,j)$, one has $\Am(1,1)\leq
  A(1,1)-1 \leq \Bm(1,1)$. Similarly, $\Am(2,1)\leq A(2,1)+1 \leq
  \Bm(2,1)$, $\Am(1,2)\leq A(1,2)+1 \leq \Bm(1,2)$ and $\Am(2,2)\leq
  A(2,2)-1 \leq \Bm(2,2)$. It follows that $C=f(A,(2,1,1,2))\in\Omega$ and
  $d(C,B)=d(A,B)-4 < d(A,B)$. In this case, the following matrix is a
  difference matrix:
$$
\left(
\begin{array}{cc}
+ & - \\
- & +
\end{array}
\right).
$$

\item[Case 1:] If $r > 4$ and if there exists $3 \leq \ell \leq \frac{r}{2}$
  such that $A(\ell-2,\ell)\neq \Am(\ell-2,\ell)$, then, as for Case~0, $k=1$
  works with $t_1=(\ell-2,\ell-1,\ell-1,\ell)$: $f(A,t_1)\in\Omega$.
  Moreover $d(f(A,t_1),B)=d(A,B)-4$ if $A(\ell-2,\ell)>B(\ell-2,\ell)$;
  $d(f(A,t_1),B)=d(A,B)-2$ otherwise. In this case, the following matrix is a
  difference matrix:
$$
\left(
\begin{array}{ccccccc}
+ & - &&&&&\\
& + & - &A(\ell-2,\ell)&&&\\
 && + & - &&&\\
 &&& + & \ddots &&\\
 &&&& \ddots & - &\\
 &&&&& + & -\\
 -&&&&&& +  \\
\end{array}
\right).
$$
\item[Case 2:] If $r > 4$ and Case~1 does not hold and if there exists $1\leq
  \ell \leq \frac{r}{2}-1$ such that $A(\ell+1,\ell)< \Bm(\ell+1,\ell)$,
  then, similarly, $k=1$ and $t_1=(\ell,\ell+1,\ell+1,\ell)$ works.
  One has $d(f(A,t_1),B)=d(A,B)-4$ if $A(\ell+1,\ell)<
  B(\ell+1,\ell)$, and $d(f(A,t_1),B)=d(A,B)-2$ otherwise.
  In this
  case, the following matrix is a difference matrix:
$$
\left(
\begin{array}{ccccccc}
+ & - &\min&&&&\\
& + & - &\min&&&\\
 &A(\ell+1,\ell)& + & - &\min&&\\
 &&& + & \ddots &\ddots&\\
 &&&& \ddots & - &\min\\
 &&&&& + & -\\
 -&&&&&& +  \\
\end{array}
\right).
$$

\item[Case 3:] If Cases~0 to~2 do not hold and if $A(1,\frac{r}{2})\neq
  \Bm(1,\frac{r}{2})$, then, similarly, $k=1$ and
  $t_1=(1,\frac{r}{2},\frac{r}{2},1)$ works. One has $d(f(A,t_1),B)=d(A,B)-4$
  if $A(1,\frac{r}{2})< B(1,\frac{r}{2})$, and $d(f(A,t_1),B)=d(A,B)-2$
  otherwise. In this case, the following matrix is a difference matrix:
$$
\left(
\begin{array}{ccccccc}
+ & - &\min&&&&A(1,\frac{r}{2})\\
\max& + & - &\min&&&\\
 &\max& + & - &\min&&\\
 &&\max& + & \ddots &\ddots&\\
 &&& \ddots & \ddots & - &\min\\
 &&&&\max& + & -\\
 -&&&&&\max& +  \\
\end{array}
\right).
$$

\item[Case 4:] If Cases~0 to~3 do not hold. Since
  $A(\frac{r}{2},\frac{r}{2})=\Bm(1,\frac{r}{2})$ and
  $A(1,\frac{r}{2}-2)=\Am(\frac{r}{2},\frac{r}{2}-2)$, $i_0=\max \{i\mid 1\leq i <
  \frac{r}{2}-2 \text{ and } A(i,\frac{r}{2})\neq \Am(i,\frac{r}{2})\}$
  exists. In this case $t_1=(i_0,i_0+1,i_0+1,\frac{r}{2}),
  t_2=(i_0+1,i_0+2,i_0+2,\frac{r}{2}),\ldots,
  t_{\frac{r}{2}-2-i_0}=(\frac{r}{2}-2,\frac{r}{2}-1,\frac{r}{2}-1,\frac{r}{2})$
  works.
  With $C=f(\ldots f(f(A,t_1),t_2)\ldots,t_{\frac{r}{2}-2-i_0})$.
  One has $d(C,B)=d(A,B)-2\times(\frac{r}{2}-i_0-1)$ if
  $A(i_0,\frac{r}{2})>B(i_0,\frac{r}{2})$, and
  $d(C,B)=d(A,B)-2\times(\frac{r}{2}-i_0-2)$ otherwise.
  Moreover, for every $\ell\leq \frac{r}{2}-2-i_0$, $f(\ldots
  f(f(A,t_1),t_2)\ldots, t_\ell)\in \Omega$.
  In this case, the following matrix is a difference matrix:
$$
\left(
\begin{array}{ccccccc}
+ & - &\min&&&&\max\\
\max& + & - &\min&&&\\
 &\max& + & - &\min&&A(i_0,\frac{r}{2})\\
 &&\max& + & \ddots &\ddots&\vdots\\
 &&& \ddots & \ddots & - &\min\\
 &&&&\max& + & -\\
 -&&&&&\max& +  \\
\end{array}
\right).
$$
\end{itemize}

\end{proof}

One can now prove Theorem~\ref{thm:ergodic}.

\begin{proof}
If $A=f(B,(i_0,j_0,i_1,j_1))$, then $B=f(A,(i_1,j_1,i_0,j_0))$, proving that
the Markov Chain is symmetric.

Let $A_0\in \Omega$. We define the sequence $(A_k)_{k\geq 0}$ by
$A_{k+1}=f(A_k,(1,1,2,2))$. The sequence $A_k(1,2)$ is decreasing and
positive. Therefore, one can define the smallest index $k_0$ such that
$A_{k_0}(1,2)=A_{k_0+1}(1,2)$. By construction, one also has
$A_{k_0}=A_{k_0+1}$. It follows that the Markov Chain is aperiodic.

Since $d$ is a distance, irreducibility is a direct consequence of Lemma~\ref{lm:path}.
\end{proof}

Consider the two matrices $A_1$ and $B_1$ defined previously with
$\Bm$ containing only the value 7.
Case~4 of the
proof can be applied. One has $t_1=(1,2,2,5)$ and
$$
f(A_1,t_1)=A_2=\left(
\begin{array}{ccccc}
3 & 1 & 0 & 0 & 6\\
7 & 3 & 0 & 0 & 1\\
0 & 7 & 5 & 0 & 0\\
0 & 0 & 7 & 6 & 0\\
0 & 0 & 0 & 7 & 5\\
\end{array}
\right).
$$
Next, $t_2=(2,3,3,5)$ and
$$
f(A_2,t_2)=A_3=\left(
\begin{array}{ccccc}
3 & 1 & 0 & 0 & 6\\
7 & 3 & 1 & 0 & 0\\
0 & 7 & 4 & 0 & 1\\
0 & 0 & 7 & 6 & 0\\
0 & 0 & 0 & 7 & 5\\
\end{array}
\right).
$$
We have $t_3=(3,4,4,5)$ and
$$
f(A_3,t_3)=A_4=\left(
\begin{array}{ccccc}
3 & 1 & 0 & 0 & 6\\
7 & 3 & 1 & 0 & 0\\
0 & 7 & 4 & 1 & 0\\
0 & 0 & 7 & 5 & 1\\
0 & 0 & 0 & 7 & 5\\
\end{array}
\right).
$$
Finally, $f(A_4,(5,1,1,5))=B_1$ (Case 0): there is a path from $A_1$
to $B_1$ and,
since the chain is symmetric, from $B_1$ to $A_1$.

%%%%%%%
\subsection{Rapidly Mixing Chains}\label{sec:rapid}

The chain $\mathcal{M}$ can be classically modified in order to mix faster: once an
element of $\mathcal{K}$ is picked up, rather than changing each element by
$+1$ or $-1$, each one is modified by $+a$ or $-a$, where $a$ is picked
uniformly in order to respect the constraints of the matrix. This approach,
used for instance in\cite{DBLP:journals/tcs/DyerG00}, allows moving faster,
particularly for large $N$'s. 

Moving in $\Om$, from matrix $M$, while $(i_0,j_0,i_1,j_1)$ has been picked
in $\mathcal{K}$, $a$ is uniformly chosen such that $a\leq
\min\{M(i_0,j_1),M(i_1,j_0)\}$ in order to keep positive elements in the
matrix. It can be generalized for constrained Markov Chains. For instance,
in $\Om[\alpha,\beta]$, one has $a\geq 1$ and 
\begin{align*}
a\leq \min\{\alpha - M(i_0,j_0),\alpha-M(i_1,j_1),\\M(i_0,j_1)-\beta,M(i_1,j_0)-\beta\}.
\end{align*}

This approach is used in the experiments described in
Sections~\ref{MC:convergence} and~\ref{sec:const-effect-cost}.

%%%%%%%%%%%%%%%%%%%%%%%%%%%%%%%%%%%%%%%%%%%%%%%%%%%%%%%%%%%%%%%%%%%%%%%%%%
\section{Convergence of the Markov Chains}\label{MC:convergence}

We can now generate a matrix that is uniformly distributed when the
Markov Chain is run long enough to reach a stationary distribution.
The mixing time $t_\mathrm{mix}(\varepsilon)$ of an ergodic Markov Chain is the
number of steps required in order to be $\varepsilon$-close to the
stationary distribution (for the total variation distance,
see\cite[Chapter~4]{LevinPeresWilmer2006}). Computing theoretical bounds on
mixing time is a hard theoretical problem. For two rowed contingency tables,
it is proved in\cite{DBLP:journals/tcs/DyerG00} that $t_{\rm
  mix}(\varepsilon)$ is in $O(n^2\log (\frac{N}{\varepsilon}))$ and
conjectured that it is in $\Theta(n^2\log (\frac{n}{\varepsilon}))$. The results
are extended and improved in\cite{Dyer06} for a fixed number of rows. As far
as we know, there are no known results for the general case. A frequently used
approach to tackle the convergence problem (when to stop mixing the chain)
consists in using statistical test. Starting from  a different point of the
state space (ideally well spread in the graph), we perform several random
walks and we monitor numerical parameter in order to observe the
convergence. For our work, used parameters are defined in
Section~\ref{sec:measures}. Section~\ref{sec:initial} is dedicated to
finding different starting points. Convergence experimental results are
given in Section~\ref{sec:xpconv}.

%%%%%%%%%%%%%%%%%%%%%%%%%%%%%%%%%%%%%%%%%%%%%%%%%%%%%%%%%%%%%%%%%%%%%%%%%%
\subsection{Measures}\label{sec:measures}

We apply a set of measures on the matrix at each step of the Markov
process to assess its convergence.
At first, these measures heavily depend on the initial matrix.
However, they eventually converge to a stationary distribution as the
number of steps increases.
In the following, we assume that once they converge, the Markov
Chain is close to the stationary distribution.

These measures consist in:
\begin{itemize}
\item the cost Coefficient-of-Variation (ratio of standard deviation
  to mean):\\
  $\sqrt{\frac1{nm}\sum_{i=1}^n\sum_{j=1}^m\left( M(i,j)/\frac{N}{nm}
      - 1 \right)^2}$
\item the mean of row Coefficients-of-Variation:
  $\sum_{i=1}^n\frac{\sqrt{\frac1{m}\sum_{j=1}^m(M(i,j)-\frac{\omu(i)}{m})^2}}{n\omu(i)}$
\item the mean of column Coefficients-of-Variation:
  $\sum_{j=1}^m\frac{\sqrt{\frac1{n}\sum_{i=1}^n(M(i,j)-\frac{\onu(j)}{n})^2}}{m\onu(j)}$
\item Pearson's $\chi^2$ statistic:
  $\sum_{i=1}^n\frac{(M(i,j)-\omu(i)\onu(j)/N)^2}{\omu(i)\onu(j)/N}$
\item the mean of row correlations:
  $\frac{1}{n(n-1)/2}\sum_{i=1}^{n-1}\sum_{i'=i+1}^n\rho(M(i,.),
  M(i',.))$
\item the mean of column correlations:
  $\frac{1}{m(m-1)/2}\sum_{j=1}^{m-1}\sum_{j'=j+1}^m\rho(M(.,j),
  M(.,j'))$
\end{itemize}
where $\rho(M(i,.), M(i',.))$ (resp.\ $\rho(M(.,j), M(.j'))$) denotes
the Pearson coefficient of correlation between rows $i$ and $i'$
(resp.\ columns $j$ and $j'$).
When a row or column contains identical values, the related
correlations are undefined.
When a row (resp.\ column) sum is zero, the mean of row (resp.\
column) CV and the $\chi^2$ are undefined.

The first measure is an indicator of the overall variance of the
costs.
The second two measures indicate whether this variance is distributed
on the rows (task heterogeneity) or the columns (machine
heterogeneity).
The $\chi^2$ is used to assess the proportionality of the costs and
the correlations show whether rows or columns are proportional.

%%%%%%%%%%%%%%%%%%
\subsection{Initial Matrix}\label{sec:initial}

The Markov Chain described in Section~\ref{section:MC} requires an
initial matrix.
Before reaching the stationary distribution, the Markov Chain
iterates on matrices with similar characteristics to the initial
one.
However, after enough steps, the Markov Chain eventually
converges.
We are interested in generating several initial matrices with
different characteristics to assess this number of steps.
Formally, given $\omu$, $\onu$, $\Am$ and
$\Bm$, how to find an element of $\OMEGA$ to start the Markov Chain?

We identify three different kinds of matrices for which we propose simple
generation methods:
\begin{itemize}
\item a \emph{homogeneous} matrix with smallest cost CV
  (Algorithm~\ref{algo:homo})
\item a \emph{heterogeneous} matrix with largest cost CV
  (Algorithm~\ref{algo:hetero})
\item a \emph{proportional} matrix with smallest Pearson's
  $\chi^2$ statistic (Algorithm~\ref{algo:prop})
\end{itemize}

Ideally, initial matrices could be generated with an exact method (e.g.\
with an integer programming solver).
However, the optimality is not critical to assess the time to converge
and Algorithms~\ref{algo:homo} to~\ref{algo:prop} have low costs but
are not guaranteed.

Moreover, the convergence may be the longest when the search space is
the largest, which occurs when the space is the least constrained.
Thus, Algorithms~\ref{algo:homo} to~\ref{algo:prop} are used to study
convergence without constraints $\Am$ and $\Bm$.
Only Algorithm~\ref{algo:prop} supports such constraints and is used
to study their effects in Section~\ref{sec:const-effect-cost}.

\begin{algorithm}
\DontPrintSemicolon
\KwIn{Integer vectors $\omu$, $\onu$}
\KwOut{$M\in\Om$}
\Begin{
  $M\gets\{0\}_{1\le i\le n,1\le j\le m}$\;
  \While{$\sum_{i=1}^n\sum_{j=1}^nM(i,j)\ne N$}{
    \eIf{$\max(\omu(i))/m\ge\max(\onu(j))/n$}{
      $i\gets\arg\max_{i}\omu(i)$\;
      sort $j_1,\ldots,j_m$ such that $\onu(j_k)\le\onu(j_{k+1})$\;
      \For{$j_k\in\{j_1\ldots,j_m\}$}{
        $d\gets\min(\onu(j_k), \frac{\omu(i)}{m-k+1})$\;
        $M(i,j_k)\gets M(i,j_k)+d$\;
        $\omu(i)\gets\omu(i)-d$\;
        $\onu(j_k)\gets\onu(j_k)-d$\;
      }
    }{
      perform the same operation on the transpose matrix (swapping
      $\omu$ and $\onu$)\;
    }
  }
  \Return{$M$}
}
\caption{Homogeneous Matrices\label{algo:homo}}
\end{algorithm}

\begin{algorithm}
\DontPrintSemicolon
\KwIn{Integer vectors $\omu$, $\onu$}
\KwOut{$M\in\Om$}
\Begin{
  $M\gets\{0\}_{1\le i\le n,1\le j\le m}$\;
  \While{$\sum_{i=1}^n\sum_{j=1}^nM(i,j)\ne N$}{
    $D\gets\min(\omu^T\cdot\mathbb{1}_m,\mathbb{1}_n^T\cdot\onu)$\;
    $i_{\max},j_{\max}\gets\arg\max_{i,j}D(i,j)$\;
    $d\gets D(i_{\max},j_{\max})$\;
    $M(i_{\max},j_{\max})\gets d$\;
    $\omu(i_{\max})\gets\omu(i_{\max})-d$\;
    $\onu(j_{\max})\gets\onu(j_{\max})-d$\;
  }
  \Return{$M$}
}
\caption{Heterogeneous Matrices\label{algo:hetero}}
\end{algorithm}

\begin{algorithm}
\DontPrintSemicolon
\KwIn{Integer vectors $\omu$, $\onu$, integer matrices $\Am$, $\Bm$}
\KwOut{$M\in\OMEGA$}
\Begin{
  $M\gets \max(\Am,\min(\lfloor\omu^T\times\onu/N+1/2\rfloor,\Bm))$\;
  $\omu(i)\gets\omu(i)-\sum_{j=1}^m M(i,j)$ for $1\le i\le n$\;
  $\onu(j)\gets\onu(j)-\sum_{i=1}^n M(i,j)$ for $1\le j\le m$\;
  \While{$\sum_{j=1}^mM(i,j)\ne\omu(i)$ or $\sum_{i=1}^nM(i,j)\ne\onu(j)$}{
    choose random $i$ and $j$\;
    $d\gets0$\;
    \lIf{$M(i,j)<\Bm(i,j),(\omu(i)>0\ \mathrm{or}\ \onu(j)>0)$}{$d\gets1$}
    \lIf{$M(i,j)>\Am(i,j),(\omu(i)<0\ \mathrm{or}\ \onu(j)<0)$}{$d\gets-1$}
    $M(i,j)\gets M(i,j)+d$\;
    $\omu(i)\gets\omu(i)-d$\;
    $\onu(j)\gets\onu(j)-d$\;
  }
  \Return{$M$}
}
\caption{Proportional Matrices\label{algo:prop}}
\end{algorithm}

Algorithm~\ref{algo:homo} starts with an empty matrix.
Then, it iteratively selects the row (or column) with largest
remaining sum.
Each element of the row (or column) is assigned to the highest average
value.
This avoids large elements in the matrix and leads to low variance.
Algorithm~\ref{algo:hetero} also starts with an empty matrix.
Then, it iteratively assigns the element that can be assigned to the
largest possible value.
This leads to a few large elements in the final matrix.
Algorithm~\ref{algo:prop} starts with the rounding of the rational
proportional matrix (i.e.\ the matrix in which costs are proportional
to the corresponding row and column costs) and proceeds to few random
transformations to meet the constraints.

In Algorithms~\ref{algo:homo} and~\ref{algo:hetero}, the argmin and
argmax can return any index arbitrarily in case of several minimums.
In Algorithm~\ref{algo:hetero}, $\mathbb{1}_n$ denotes a vector of $n$
ones.
Finally, in Algorithms~\ref{algo:hetero} and~\ref{algo:prop}, $\omu^T$
denotes the transpose of $\omu$, which is a column vector.

%%%%%%%%%%%%%%%%%%%%%%%%%%%%%%%%%%%%%%%%%%%%%%%%%%%%%%%%%%%%%%%%%%%%%%%%%%
\subsection{Experiments}\label{sec:xpconv}

We first illustrate the approach with the example of a $20\times10$ matrix with
$N=4\,000$ with given $\omu$ and $\onu$. Starting from three different
matrices as defined in Section~\ref{sec:initial}, we monitor the measures
defined in Section~\ref{sec:measures} in order to observe the
convergence (here, approximately
after 6\,000 iterations). It is, for instance, depicted in Figure~\ref{fig:conv1}
for the cost CV (diagrams for other measures are similar and
seems to converge faster). Next, for every measure, many walks with
different  $\omu$ and $\onu$ (but same $N$) are performed and the value of
the measures is reported in boxplots\footnote{Each boxplot consists
of a bold line for the median, a box for the quartiles, whiskers that
extends to 1.5 times the interquartile range from the box and
additional points for outliers.} for several walking steps, as in
Figure~\ref{fig:conv2} for the CV, allowing to improve the
confidence in  the hypothesis of convergence. One can observe that the three
boxplots are synchronized after about 6\,000 iterations.

\begin{figure*}
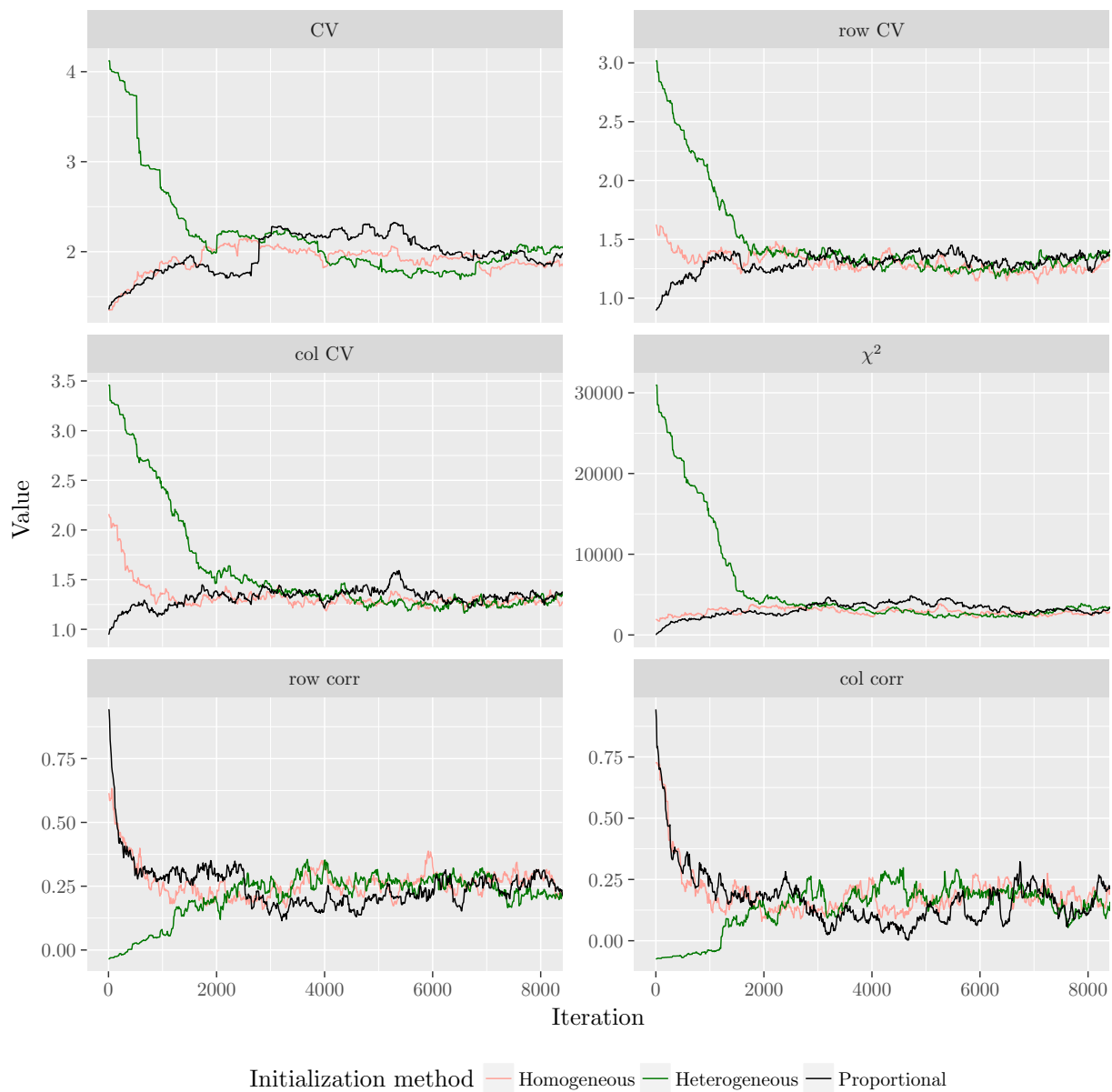

  \small
  \includestandalone[width=\textwidth]{4_convergence_line_RR}
  \caption{Evolution of the measures for a $20\times10$ matrix, with
    $N=20\times n\times m=4\,000$.
    Initial row and column sums ($\omu$ and $\onu$) are generated with
    Algorithm~\ref{algo:nu} without constraints (i.e.\ $\alpha=0$ and
    $\beta=N$).
    Initial matrices are generated with Algorithms~\ref{algo:homo}
    to~\ref{algo:prop} without constraints.}
  \label{fig:conv1}
\end{figure*}

\begin{figure*}
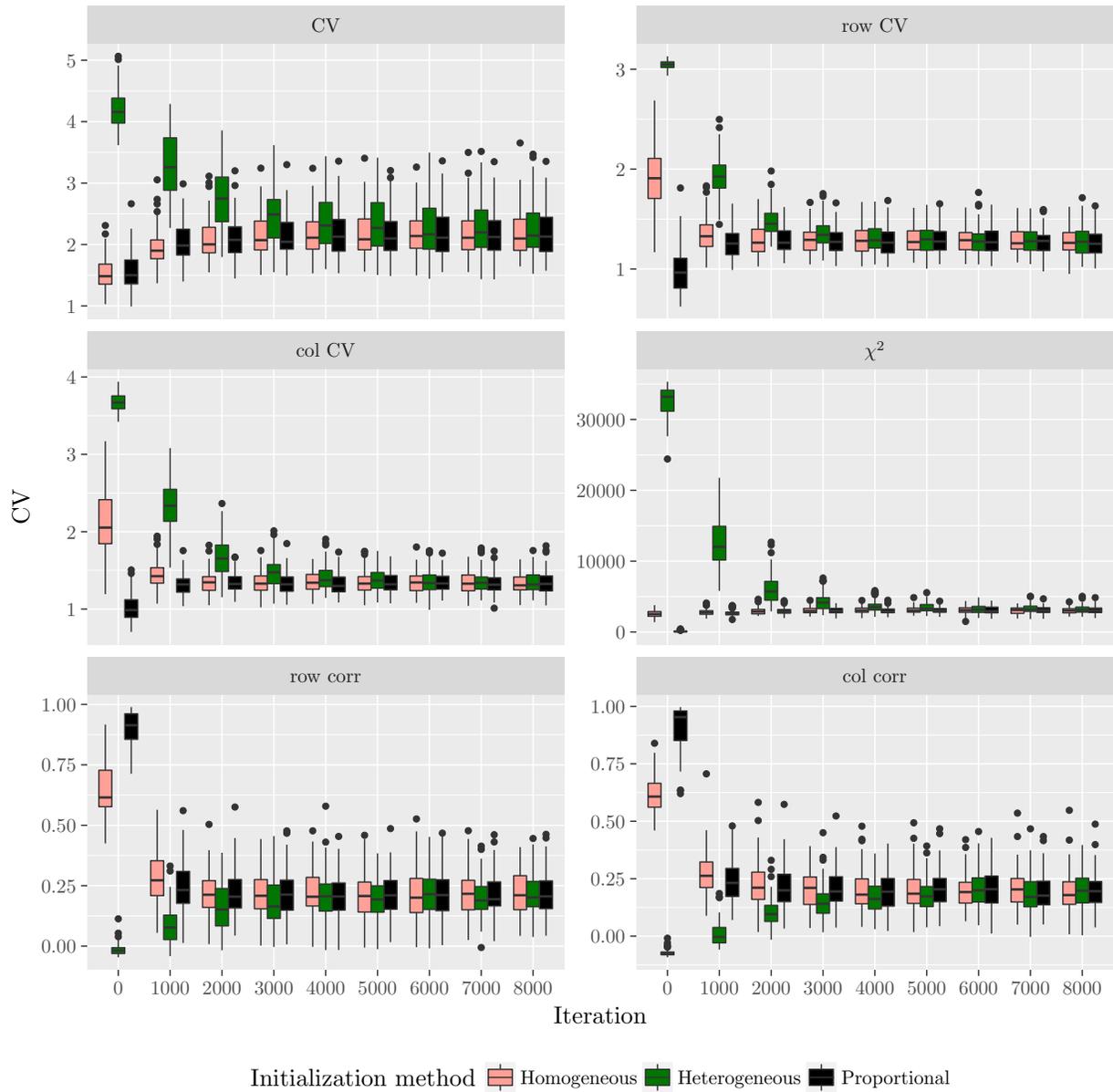

  \small
  \includestandalone[width=\textwidth]{5_convergence_boxplot_RR}
  \caption{Evolution of the measures for matrices with the same
    characteristics as in Figure~\ref{fig:conv1}.
    Each boxplot corresponds to 100 matrices, each based on distinct
    row and column sums.
    When a row or column sum is zero, all undefined measures are
    discarded.}
  \label{fig:conv2}
\end{figure*}

These experiments have been performed for several matrices sizes,
several $\omu$, $\onu$ generations (with different min/max
constraints), and different $N$.
The experimental results seem to point out
that the convergence speed is independent of $N$ (assuming that $N$ is large
enough to avoid bottleneck issues) and independent of the min/max
constraints on $\omu$ and $\onu$. Estimated convergence time (iteration
steps) obtained manually with a visual method (stability for the measures)
for several sizes of matrices are reported in Table~\ref{tab:mixingtimes}.
Experimentally, the mixing (convergence) time seems to be linearly bounded by
$nm \log^3(nm)$.

\begin{table}
\centering
\begin{tabular}{cc}
  \toprule
  $(n,m)$ & mixing time\\
  \midrule
  $(5,5)$ & 200\\
  $(5,10)$ & 600\\
  $(5,15)$ & 1\,000\\
  $(10,10)$ & 2\,500\\
  $(10,15)$ & 3\,500\\
  $(10,20)$ & 6\,000\\
  $(25,10)$ & 7\,500\\
  $(15,20)$ & 8\,000\\
  $(15,25)$ & 13\,000\\
  $(20,25)$ & 30\,000\\
  $(20,30)$ & 50\,000\\
  $(40,20)$ & 65\,000\\
  $(40,40)$ & 210\,000\\
  \bottomrule
\end{tabular}
  \caption{\label{tab:mixingtimes} Estimated mixing times with a
    visual method and with varying number of rows $n$ and columns
    $m$.}
\end{table}

%%%%%%%%%%%%%%%%%%%%%%%%%%%%%%%%%%%%%%%%%%%%%%%%%%%%%%%%%%%%%%%%%%%%%%%%%%
\section{Performance Evaluation of Scheduling Algorithms}
\label{sec:const-effect-cost}

This section studies the effect of the constraints on the matrix
properties (Section~\ref{sec:const-effect-cost-1}) and on the
performance of some scheduling heuristics from the literature
(Section~\ref{section:performance}).

This section relies on matrices of size $20\times10$ with non-zero
cost.
This is achieved by using $\alpha\ge m$ for $\omu$, $\alpha\ge n$ for
$\onu$ and a matrix $\Am$ containing only ones.

Section~\ref{MC:convergence} provides estimation for the convergence
time of the Markov Chain depending on the size of the cost matrix in
the absence of constraints on the vectors ($\alpha$ and $\beta$) and
on the matrix ($\Am$ and $\Bm$).
We assume that the convergence time does not strongly depend on the
constraints.
Moreover, this section relies on an inflated number of iterations,
i.e.\ $50\,000$, for safety, starting from the proportional matrix
(Algorithm~\ref{algo:prop}).

\subsection{Constraints Effect on Cost Matrix Properties}
\label{sec:const-effect-cost-1}

We want to estimate how the constraints on the $\omu$ and $\onu$
random generation influence the matrix properties.
Figure~\ref{fig:conv3} reports the results.
Each row is dedicated to a property from the CV to the column
correlation that are presented in Section~\ref{sec:measures}, with the
inclusion of the $\omu$ and $\onu$ CV.
On the left of the plot, only $\onu$ is constrained.
In the center only $\omu$ and in the right, both $\omu$ and $\onu$.
Constraints are parametrized by a coefficient in
$\lambda \in \{0,0.2,\ldots,1\}$: intuitively, large values of
$\lambda$ impose strong constraints and limit the CV.
The influence of $\lambda$ on the CV of $\omu$ and/or
$\onu$ is consistent with Figure~\ref{fig:variance} in
Section~\ref{section:vectors}: the value decreases from about 20 to 0
as the constraint increases.

\begin{figure}
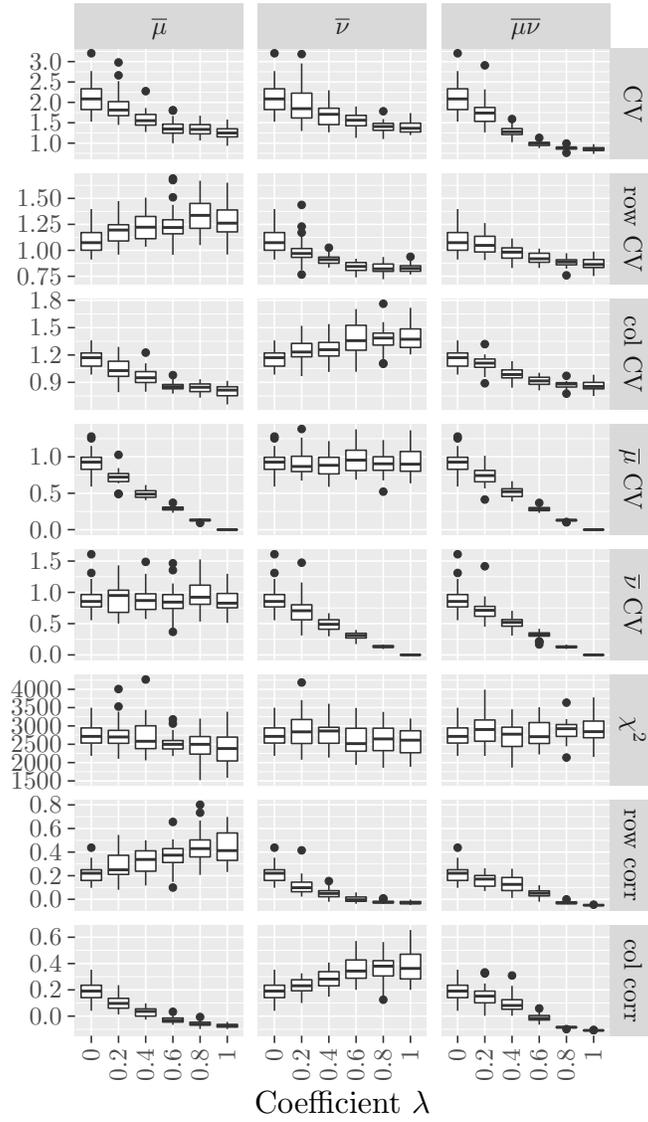

  \centering
  \includestandalone{6_munu_constraints_RR}
  \caption{Values for the measures presented in
    Section~\ref{sec:measures} and the cost sums ($\omu$ and $\onu$)
    CV after $50\,000$ iterations starting with a proportional
    $20\times10$ matrix generated with Algorithm~\ref{algo:prop} with
    different constraints on $\omu$ and/or $\onu$ (either ones on the
    first two columns, both on the third) and with
    $N=20\times n\times m=4\,000$.
    The constraint on $\omu$ (resp.\ $\onu$) is parameterized by a
    coefficient $0\le\lambda\le1$ such that
    $\alpha=\lfloor \frac{\lambda N}{n} \rfloor$ (resp.\
    $\lfloor \frac{\lambda N}{m} \rfloor$) and
    $\beta=\lceil \frac{N}{\lambda n} \rceil$ (resp.\
    $\lceil \frac{N}{\lambda m} \rceil$), with the convention
    $1/0=+\infty$.
    Each matrix contains non-zero costs.
    Each boxplot corresponds to 30 matrices, each based on distinct
    row and column sums.}
  \label{fig:conv3}
\end{figure}

The heterogeneity of a cost matrix can be defined in two
ways\cite{canon2017heterogeneity}: using either the CV of $\omu$ and
$\onu$, or using the mean row and column CV.
Although constraining $\omu$ and $\onu$ limits the former kind of
heterogeneity, the latter only decreases marginally.
To limit the heterogeneity according to both definitions, it is
necessary to constraint the matrix with $\Am$ and $\Bm$.
Figure~\ref{fig:conv4} shows the effect of these additional
constraints when the cost matrix cannot deviate too much from an ideal
fractional proportional matrix.
In particular, $\omu$ (resp.\ $\onu$) is constrained with a parameter
$\lambda_r$ (resp.\ $\lambda_c$) as before.
The constraint on the matrix is performed with the maximum $\lambda$
of these two parameters.
This idea is to ensure the matrix is similar to a proportional matrix
$M$ with $M(i,j)=\frac{\omu(i)\times\onu(j)}{N}$ when any constraint
on the row or column sum vectors is large.

\begin{figure*}
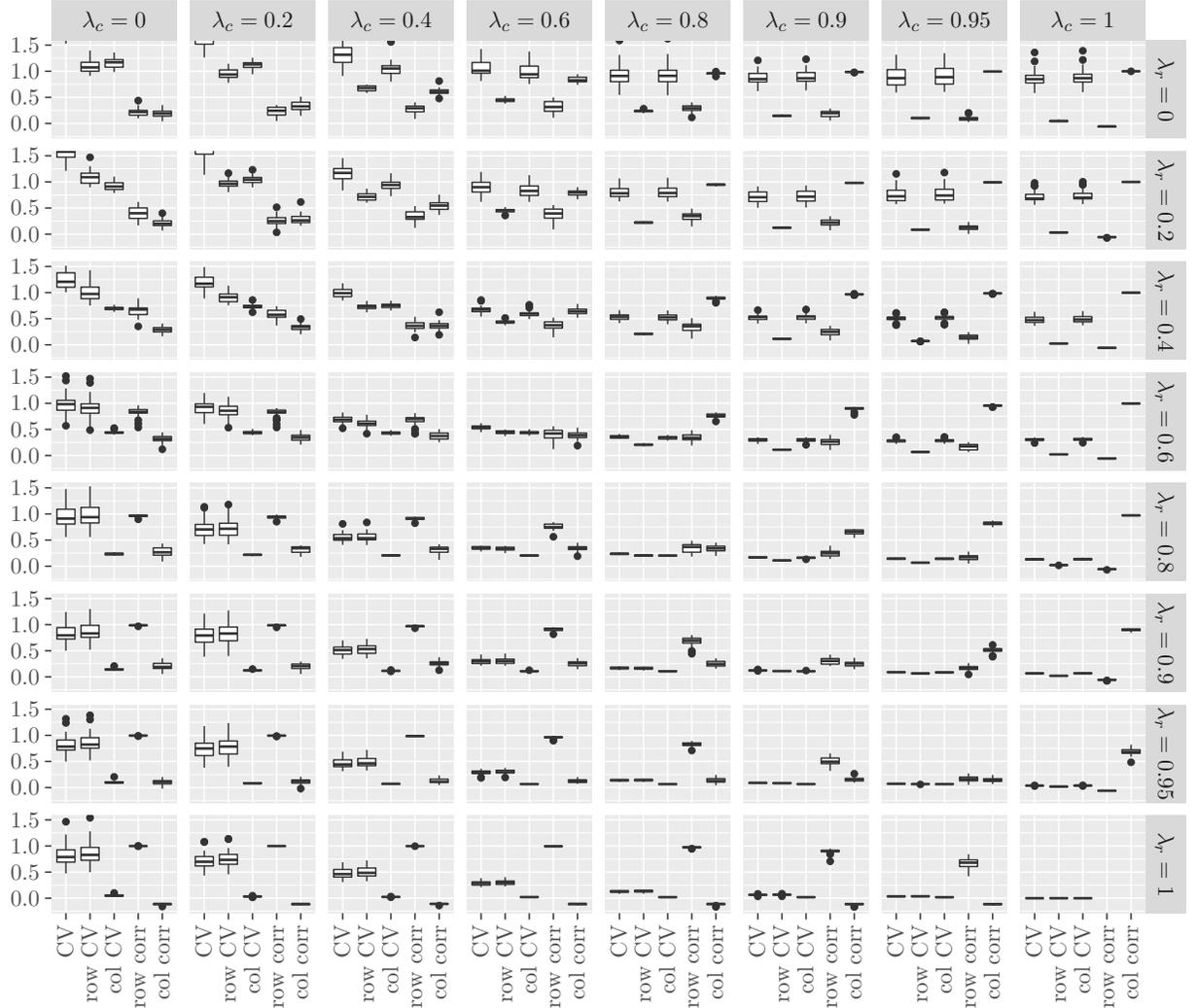

  \includestandalone[width=\textwidth]{7_AminBmax_constraints_RR}
  \caption{Values for the measures presented in
    Section~\ref{sec:measures} and the cost sums ($\omu$ and $\onu$)
    CV after $50\,000$ iterations starting with a proportional
    $20\times10$ matrix generated with Algorithm~\ref{algo:prop} with
    different constraints on $\omu$, $\onu$ and the matrix, and with
    $N=20\times n\times m=4\,000$.
    The constraint on $\omu$ (resp.\ $\onu$) is parameterized by a
    coefficient $0\le\lambda_r\le1$ (resp.\ $0\le\lambda_c\le1$) such
    that $\alpha=\lfloor \frac{\lambda_r N}{n} \rfloor$ (resp.\
    $\lfloor \frac{\lambda_c N}{m} \rfloor$) and
    $\beta=\lceil \frac{N}{\lambda_r n} \rceil$ (resp.\
    $\lceil \frac{N}{\lambda_c m} \rceil$), with the convention
    $1/0=+\infty$.
    The constraint on the matrix is parameterized by a coefficient
    $\lambda=\max(\lambda_r,\lambda_c)$ such that
    $\Am=\lfloor \lambda M \rfloor$ and
    $\Bm=\lceil M / \lambda \rceil$ with
    $M(i,j)=\frac{\omu(i)\times\onu(j)}{N}$.
    Each matrix contains non-zero costs.
    Each boxplot corresponds to 30 matrices, each based on distinct
    row and column sums.
    When $\lambda_r=\lambda_c=1$, all costs are identical and the
    correlations are discarded.}
  \label{fig:conv4}
\end{figure*}

The figure shows that the cost CV decreases as both $\lambda_r$ and
$\lambda_c$ increase.
Moreover, as for the $\omu$ (resp.\ $\onu$) CV, the mean column
(resp.\ row) CV decreases as $\lambda_r$ (resp.\ $\lambda_c$)
increases.
We can thus control the row and column heterogeneity with
$\lambda_r$ and $\lambda_c$, respectively.
Note that when reducing the heterogeneity, row or column correlations
tend to increase.
In particular, large values for $\lambda_r$/$\lambda_c$ lead to jumps
from small correlations when $\lambda_r=\lambda_c$ to large row
(resp.\ column) correlation when $\lambda_r=1$ (resp.\ $\lambda_c=1$).

%%%%%%%%%%%%%%%%%%%%%%%%%%%%%%%%%%%%%%%%%%%%%%%%%%%%%%%%%%%%%%%%%%%%%%%%%%
\subsection{Constraints Effect on Scheduling Algorithms}\label{section:performance}

Generating random matrices with parameterized constraints allows the
assessment of existing scheduling algorithms in different contexts.
In this section, we focus on the impact of cost matrix properties on
the performance of three heuristics for the problem denoted
$R||C_{max}$.
This problem consists in assigning a set of independent tasks to
machines such that the \emph{makespan} (i.e.\ maximum completion time
on any machine) is minimized.
The cost of any task on any machine is provided by the cost matrix and
the completion time on any machine is the sum of the costs of all task
assigned to it.

The heuristics we consider constitute a diversified selection among
the numerous heuristics that have been proposed for this problem in
terms of principle and cost:
\begin{itemize}
\item \emph{BalSuff}, an efficient
  heuristic\cite{canon2017heterogeneity} with unknown complexity that
  balances each task to minimize the makespan.
\item \emph{HLPT}, Heterogeneous-Longest-Processing-Time, iteratively
  assigns the longest task to the machine with minimum completion time
  in $O(nm+n\log(n))$ steps.
  This is a natural extension of LPT\cite{graham69} and variant of
  HEFT\cite{topcuoglu2002performance} in which the considered cost for
  each task is its minimal one.
\item \emph{EFT}, Earliest-Finish-Time, (or MinMin) is a classic
  principle, which iteratively assigns each task by selecting the task
  that finishes the earliest on any machine.
  Its time complexity is $O(n^2m)$.
\end{itemize}

We selected four scenarios that represent the extremes in terms of
parameters, heterogeneity and correlation: $\lambda_r=\lambda_c=0$
with the most heterogeneity and the least correlation,
$\lambda_r=0,\lambda_c=1$ with a high task and low machine
heterogeneity, $\lambda_r=1,\lambda_c=0$ with a low task and high
machine heterogeneity, and $\lambda_r=0.75,\lambda_c=1$ with low
heterogeneity and high correlation (the case $\lambda_r=\lambda_c=1$
lead to identical costs for which all heuristics perform the same).
Table~\ref{tab:conv5} gives the mean properties for each scenario with
100 matrices each.
Figure~\ref{fig:conv5} depicts the results: for each scenario and
matrix, the makespan for each heuristic was divided by the best one
among the three.
All heuristics exhibit different behaviors that depends on the
scenario.
BalSuff outperforms its competitors except when $\lambda_r=0.75$ and
$\lambda_c=1$, in which case it is even the worst.
HLPT is always the best when $\lambda_c=1$.
In this case, each task has similar costs on any machine.
This corresponds to the problem $P||C_{\max}$, for which LPT, the
algorithm from which is inspired HLPT, was proposed with an
approximation ratio of 4/3\cite{graham69}.
The near-optimality of HLPT for instances with large row and low
column heterogeneity is consistent with the
literature\cite{canon2017heterogeneity}.
Finally, EFT performs poorly except when $\lambda_r=1$ and
$\lambda_c=0$.
In this case, tasks are identical and it relates to the problem
$Q|p_i=1|C_{\max}$.
These instances, for which the row correlation is high and column
correlation is low, have been shown to be the easiest for
EFT\cite{canon2017controlling}.

\begin{figure}
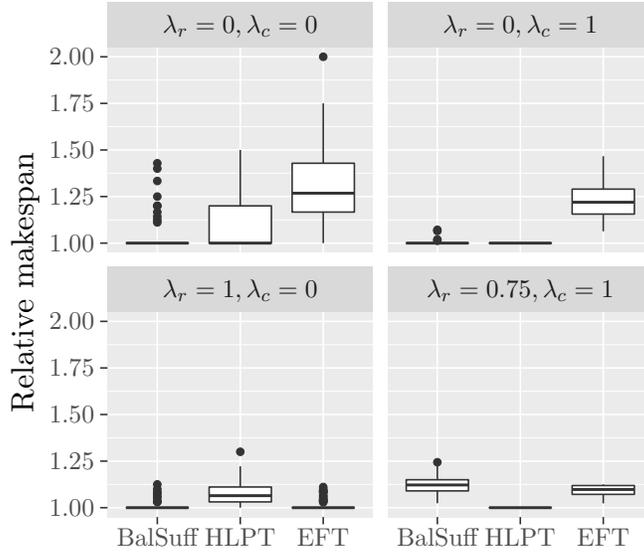

  \centering
  \includestandalone{8_heuristics}
  \caption{Ratios of makespan to the best among BalSuff, HLPT and EFT.
    The cost matrices were generated as in Figure~\ref{fig:conv4}.
    Each boxplot corresponds to 100 cost matrices, each based on
    distinct row and column sums.
  }
  \label{fig:conv5}
\end{figure}

\begin{table*}
  \centering
  \begin{tabular}{cccccccccc}
    \toprule
    $\lambda_r$ & $\lambda_c$ & CV & row CV & col CV & $\omu$ CV & $\onu$ CV & $\chi^2$ & row corr & col corr\\
    \midrule
    0 & 0 & 2.1 & 1.1 & 1.2 & 0.9 & 0.87 & 2831 & 0.21 & 0.18\\
    0 & 1 & 0.88 & 0.051 & 0.9 & 0.9 & 0 & 4.2 & -0.058 & 1\\
    1 & 0 & 0.86 & 0.9 & 0.051 & 0 & 0.9 & 1.7 & 1 & -0.11\\
    0.75 & 1 & 0.17 & 0.02 & 0.17 & 0.17 & 0 & 4.3 & -0.058 & 0.98\\
    \bottomrule
  \end{tabular}
  \caption{\label{tab:conv5} Mean properties over 100 matrices
    generated as in Figure~\ref{fig:conv5} for each pair of parameters
    $\lambda_r$ and $\lambda_c$.}
\end{table*}

%%%%%%%%%%%%%%%%%%%%%%%%%%%%%%%%%%%%%%%%%%%%%%%%%%%%%%%%%%%%%%%%%%%%%%%%%%
\section{Conclusion}

Random instance generation allows broader experimental campaigns but
can be hindered by bias in the absence of guarantee on the
distribution of the instances.
This work focuses on the generation of cost matrices, which can be
used in a wide range of scheduling problems to assess the performance
of any proposed solution.
We propose a Markov Chain Monte Carlo approach to draw random cost
matrices from a uniform distribution: at each iteration, some costs in
the matrix are shuffled such that the sum of the costs on each row and
column remains unchanged.
By proving its ergodicity and symmetry, we ensure that its stationary
distribution is uniform over the set of feasible instances.
Moreover, the result holds when restricting the set of feasible
instances to limit their heterogeneity.
Finally, experiments were consistent with previous studies in the
literature.
Although constraining the matrix generation with a minimum and maximum
matrices leads to large correlations, it remains to
determine the drawbacks of this approach and whether there could be
more relevant solutions (such as using a simulated annealing to
increase the correlations starting from a uncorrelated ones).
A more prospective future direction would be to apply the current
methodology on the generation of other types of instances such as task
graphs.

\section*{Acknowledgments}

The authors would like to thank Anne Bouillard for pointing out works on
contingency tables.

\bibliographystyle{IEEEtran}
\bibliography{biblio_RR}

\appendix

\section{Notation}
\label{sec:notation}

Table~\ref{tab:notations} provides a list of the most used notations in this
report.

\begin{table}[h]
\centering
\begin{tabular}{rm{0.6\columnwidth}}
\toprule
Symbol & Definition\tabularnewline
\midrule
$n$ & Number of rows (tasks)\tabularnewline
$m$ & Number of columns (machines)\tabularnewline
$M(i,j)$ & Element on the $i$th row and $j$th column of matrix
$M$\tabularnewline
$N$ & Sum of elements in a matrix ($\sum_{i,j}M(i,j)$)\tabularnewline
$\omu$ & Vector of size $n$. $\frac{\omu(i)}{m}$ is the mean cost
of the $i$th task. \tabularnewline
$\onu$ & Vector of size $m$. $\frac{\onu(j)}{n}$ is the mean cost
on the $j$th machine. \tabularnewline
\midrule
$H_{N,n}^{\alpha,\beta}$ & Elements $\overline{v}\in \mathbb{N}^n$ s.t.
$\alpha\leq \overline{v}(i)\leq \beta$ and $\sum_{i=1}^n
\overline{v}(i)=N$.\tabularnewline
$h_{N,n}^{\alpha,\beta}$ & Cardinal of $H_{N,n}^{\alpha,\beta}$.\tabularnewline
\midrule
$d(A,B)$ & Distance between matrices $A$ and $B$.\tabularnewline
$\Om$ & Set of contingency tables of sum $N$ and sums of rows and columns
$\omu$ and $\onu$.\tabularnewline
$\alpha$, $\beta$ & Scalar constraints on minimal/maximal values for
generated matrices.\tabularnewline
$\oalpha$, $\obeta$ & Vector constraints on
minimal/maximal values for generated matrices.\tabularnewline
$\Am$, $\Bm$ & Matrix constraints on minimal/maximal values for
generated matrices.\tabularnewline
$\Om[...]$ & Subset of $\Om$ min/max-constrained by $[...]$.\tabularnewline
$f(\cdot,\cdot)$ & Random mapping for the Markov Chains.\tabularnewline
\bottomrule
\end{tabular}
\caption{\label{tab:notations}List of notations.}
\end{table}

\end{document}